\def\isarxiv{1}
\def\paperTitle{DPBloomfilter: Securing Bloom Filters with Differential Privacy}
\def\paperAuthor{
Yekun Ke\thanks{Independent Researcher.}
\and
Yingyu Liang\thanks{The University of Hong Kong.} 
\and
Zhizhou Sha\thanks{University of Texas at Austin}
\and
Zhenmei Shi\thanks{University of Wisconsin-Madison.}
\and 
Zhao Song\thanks{\texttt{magic.linuxkde@gmail.com}. The Simons Institute for the Theory of Computing at UC Berkeley.}
\and
Jiahao Zhang
}
\theoremstyle{plain}
\newtheorem{theorem}{Theorem}[section]
\newtheorem{lemma}[theorem]{Lemma}
\newtheorem{definition}[theorem]{Definition}
\newtheorem{remark}[theorem]{Remark}
\newtheorem{condition}[theorem]{Condition}
\newcommand{\wh}{\widehat}
\newcommand{\wt}{\widetilde}
\newcommand{\ov}{\overline}
\renewcommand{\hat}{\wh}
\DeclareMathOperator*{\E}{{\mathbb{E}}}
\begin{document}

\ifdefined\isarxiv

\date{}
\title{\paperTitle}
\author{\paperAuthor}

\else

\title{\paperTitle}
\author{\paperAuthor}

\fi

\ifdefined\isarxiv
\begin{titlepage}
  \maketitle
  \begin{abstract}
    The Bloom filter is a simple yet space-efficient probabilistic data structure that supports membership queries for dramatically large datasets. It is widely utilized and implemented across various industrial scenarios, often handling massive datasets that include sensitive user information necessitating privacy preservation. To address the challenge of maintaining privacy within the Bloom filter, we have developed the DPBloomfilter. This innovation integrates the classical differential privacy mechanism, specifically the Random Response technique, into the Bloom filter, offering robust privacy guarantees under the same running complexity as the standard Bloom filter. Through rigorous simulation experiments, we have demonstrated that our DPBloomfilter algorithm maintains high utility while ensuring privacy protections. To the best of our knowledge, this is the first work to provide differential privacy guarantees for the Bloom filter for membership query problems.

  \end{abstract}
  \thispagestyle{empty}
\end{titlepage}

{\hypersetup{linkcolor=black}
\tableofcontents
}
\newpage

\else

\maketitle

\begin{abstract}

\end{abstract}

\fi



\section{Introduction}

In the current data-rich era, extracting meaningful information from the ever-growing volume of data presents a significant challenge \cite{ss13}.
To address this challenge, various data structures have been developed to facilitate the extraction of insights from vast datasets \cite{c06}, such as the Bloom filter \cite{b70}, count-min sketch \cite{c09}, hyperloglog \cite{ffgm07}, and so on. Among them, the Bloom filter mainly handles membership queries in big data \cite{b70}; count-min sketch handles the frequency of occurrence of a certain type of data in big data \cite{c09}; Hyperloglog is used to count the cardinality of a set of data, that is, the number of different elements in this set of data \cite{ffgm07}.

In this paper, we focus more on the Bloom filter \cite{b70}, which is a space-efficient probability data structure that deals with membership queries. Due to its efficient space utilization and low time complexity, it is widely used in various scenarios, especially industry scenarios requiring massive data processing and low-latency response capability. Classical scenarios include database systems and web-cache systems \cite{g82, ngp09, ml16, pnbs20}.

In addition to the scenarios mentioned above, the Bloom filter is also used in various scenarios involving sensitive user data. 
One usage is the privacy-preserving dataset intersection: When two organizations want to find out what user data they have in common without revealing specific user information, Bloom filters can be used. By converting the respective user datasets into Bloom filters and then performing an intersection operation, common elements can be determined without exposing specific user records~\cite{b13, js11}. 
Another scenario is anonymous login: Bloom filters can store hash values of login credentials. When a user tries to log in, the system can check whether the hash of the credentials may exist in the filter instead of storing the actual password hash~\cite{lvd11, bcmp20}. 
Since the content inserted into the Bloom filter is user-sensitive, preventing attackers from reconstructing user-sensitive information from the released Bloom filter vector is an essential task.

In this work, we consider the differential privacy of the Bloom filter under the membership query scenario. 
The membership query problem involves storing information about a set of elements $S$ in a space-efficient manner to determine if an element $x$ is a member of $S$.
One example is the membership query application of the Bloom filter in streaming media recommendation~\cite{wzwl+14}, such as Tiktok. 
That is, the Bloom filter will be used for filtering to prevent users from being recommended duplicate content when using streaming media.
The Bloomfilter vector mentioned above will also be released to other businesses, such as advertising, e-commerce, etc. When the Bloomfilter vector is released, the user's privacy information, which videos the user has watched, needs to be well protected.

Thus, we introduce our DPBloomfilter (Algorithm~\ref{alg:init}) to protect the sensitive user information stored in the Bloomfilter vector, i.e. the $m$ index binary bits based on the hash values generated by $k$ different hash functions.
To implement a differential privacy budget, we used the classic random response technique \cite{w65} (Definition~\ref{def:random_response}
) in differential privacy, which randomly flips some bits to ensure that attackers cannot restore sensitive user data from neighboring datasets (Definition\ref{def:pre_neighbor_dataset}).
We theoretically show that our DPBloomfilter achieves $(\epsilon, \delta)$-DP guarantee, where the main technique is that we first ensure each bit holds a certain DP guarantee so that we achieve $(\epsilon, \delta)$-DP for the entire Bloom filter. 
Also, we have theoretically proved that our DPBloomfilter has high utility when  DP parameters are in a certain regime.
Furthermore, our empirical evidence verifies our utility analysis that our DPBloomfilter can procedure membership query services with high accuracy while protecting user data privacy.
While providing privacy guarantees, our algorithm preserves the same running complexity as the standard Bloom filter. 

Our contribution can be summarized as follows:
(1) To the best of our knowledge, this is the first work to provide DP for the Bloom filter for membership query problems.
(2) We have proved from a theoretical perspective that DPBloomfilter can achieve $(\epsilon, \delta)$-DP under the random response mechanism while preserving the same running time complexity compared with the standard Bloom filter.  
(3) We have proved from a theoretical perspective that when the DP parameters $\epsilon$ and $\delta$ are very small, DPBloomfilter can still maintain good utility. 
(4) Our simulation experiments also reflect the same effect as our theoretical results. The two confirm each other.

\section{Related Work} \label{sec:related_work}


\subsection{Bloom Filter} \label{sec:related_word:bloom_filter}

The Bloom filter is first introduced by~\cite{b70} and 
there are many variants of the Bloom filter. One variant is the Cuckoo filter \cite{fakm14}, which ``kicks out'' the old hash value to another place when a hash conflict occurs. This implementation principle enables it to support the probability data structure of membership queries with deletion operation. Compared with the Standard Bloom filter, it is more suitable for application scenarios with frequent element updates, such as network traffic monitoring \cite{gjh18} and cache system \cite{wyqk22}.

Another variant is the Quotient filter~\cite{gfo18}, which differs from the traditional Bloom filter. It implements the heretical storage form of hash value atmosphere quotient and remainder. 
This approach results in the Quotient filter requiring less storage space and offering faster query speeds than the standard Bloom filter.
It is more suitable for membership queries in scenarios with limited resources and high latency requirements~\cite{pcd+21, aa16}.

\subsection{Differential Privacy} \label{sec:related_work:dp}

Differential privacy is a technique used to defend against privacy attacks, first proposed 
by Dwork et al.~\cite{dmns06}. It has become one of the most popular frameworks for ensuring privacy in theoretical analysis and a wide range of application scenarios 
\cite{llsy17, ygz+23,wcy+23,cxj24, sg24, gls+25, gll+24d,llsz24_nn_tw,lls+24_dp_je, gls+24d, fll24, syyz23, lhr+24, hll+24, ylh+24}. 
Gaussian mechanism~\cite{dmns06} and Laplace mechanism~\cite{dr+14} of DP are widely used techniques to achieve privacy budget. 
These two mechanisms control the amount of privacy provided by adjusting the variance of the added noise. However, these two mechanisms are very useful when the output is continuous, but they are slightly weak when the output is discrete.
However, another classic way to make a data structure private is to add a random responses mechanism \cite{w65}, also called a ``flip coin''. Specifically, some discrete values in the data structure are flipped with a certain probability to achieve privacy~\cite{ll23, ll24}. 
By controlling the probability of flipping, a given privacy budget is achieved.
Over the past decade, numerous works have applied differential privacy to data structures or deep learning models. \cite{knrs13} applied differential privacy to graph data structure and designed differentially node-private algorithms by projecting input graphs onto bounded-degree graphs, enhancing privacy while maintaining accuracy in realistic network analyses. \cite{wxy+18} introduced an adaptive method for directly collecting frequent terms under local differential privacy by constructing a tree, which can overcome challenges of accuracy and utility compared to existing n-gram approaches. \cite{fi19} focused on applying differential privacy to classical data mining data structures, specifically decision trees, and analyzed the balance between privacy and the utility of existing methods. \cite{zqr+22} demonstrated the integration of differential privacy into linear sketches, ensuring privacy while maintaining high performance in processing sensitive data streams.
A related work \cite{agk12} introduced the BLIP mechanism, which also applies the Random Flip mechanism to the Bloom Filter. Here, we outline the differences between our work and \cite{agk12} as follows:
(1) Our proposed DPBloomFilter can satisfy $(\epsilon,\delta)-DP$, while \cite{agk12} only verified that BLIP mechanism can satisfy $\epsilon$-DP; 
(2) \cite{agk12} did not provide theoretical guarantees for the utility of the BLIP mechanism.

{\bf Roadmap.} Our paper is organized as follows:
Section~\ref{sec:preliminary} presents the preliminary of Bloom Filter and  Differential Privacy. In Section~\ref{sec:main_result}, we outline the main results of our algorithm. 
In Section~\ref{sec:discussion}, we elaborate on the underlying 
intuitions that informed the design of the DPBloomfilter.
In Section~\ref{sec:conclusion}, we conclude our paper.

\section{Preliminary}\label{sec:preliminary}


\paragraph{Notations.} 
For any positive integer $n$, let $[n]$ denote the set $\{1, 2, \cdots , n\}$. We use $\E[]$ to denote the expectation operator and $\Pr[]$ to denote probability. We use $n!$ to denote the factorial of integer $n$. We use $A_{m}^{n}:=\frac{m!}{(m-n)!}$ to denote the number of permutation ways to choose $n$ elements from $m$ elements considering the order of selection. We use $\binom{m}{n}:=\frac{m!}{n!(m-n!)}$ to denote the number of combination ways to choose $n$ elements from $m$ elements without considering the order of selection. We use $F_{X}(x)$ to denote the Cumulative Distribution Function (CDF) of a random variable $X$ and use $F_{X}^{-1}(1-\delta)$ to denote the $1-\delta$ quantile of $F_{X}(x)$.

\subsection{Bloom Filter}\label{sec:pre_def_bf}
A Bloom filter is a space-efficient probabilistic data structure used to test whether an element is a member of the set. Its formal definition is as follows.
\begin{definition}[Bloom Filter, \cite{b70}]\label{def:bloom_filter}
    A Bloom filter is used to represent a set $A = \{x_1,x_2,\dots,x_{|A|}\}$ of $|A|$ elements from a universe $U = [n]$. A Bloom filter consists of a binary array $g \in \{0,1\}^m$ of $m$ bits, which are initially all set to $0$, and uses $k$ independent random hash functions $h_1,\dots,h_k$ with range $\{0, \dots, m-1\}$. These hash functions map each element in the universe to a random number uniform over the range $\{0,\dots,m-1\}$ for mathematical convenience. The computation time per execution for all hash functions is $\mathcal{T}_h$. Bloom Filter supports the following operations:
    (1) \textsc{Init}$(A)$. It takes dataset A as input. For each element $x \in A$, the bits $h_i(x)$ of array $g$ are set to $1$ for $1\leq i \leq k$.
    (2) \textsc{Query}$(y \in [n])$. It takes an element $y$ as input. If all $h_i(y)$ are set to $1$, then it outputs a binary answer to indicate that $y \in A$. If not, then it outputs $y$ is not a member of $A$.
\end{definition}
A Bloom Filter does not have false negative issues but may yield a \textit{false positive} issue, where it suggests that when a query is made to check if an element is in the set but all the positions it maps to are already set to $1$ (due to previous insertions of elements of dataset $A$). Following previous literature \cite{ll23, bcfm98, lk11, loz12}, we assume a hash function selects each array position with equal probability. Then, the false positive rate of the Bloom Filter defined above can be mathematically approximated by the formula as
$
    (1-e^{-\frac{k|A|}{m}})^k.
$

\subsection{Differential Privacy}\label{sec:pre_def_dp}

We begin with introducing the neighboring dataset. We follow the standard definition in the DP literature of ``neighboring'' for binary data vectors: two datasets are adjacent if they differ in one element. The formal statement is as follows.

\begin{definition}[Neighboring Dataset, \cite{dmns06}]\label{def:pre_neighbor_dataset}
$A, A' \in \{0,1\}^n$ are neighboring datasets if they only differ in one element, i.e., $A_i \neq A'_i$ for one $i \in [n]$ and $A_j = A'_j$, for $j \neq i$.
\end{definition}

Differential Privacy (DP) ensures that the output of an algorithm remains statistically similar under neighboring datasets introduced above,
thereby protecting individual privacy. 
Its formal definition is as follows.

\begin{definition}[Differential Privacy, \cite{dmns06}]\label{def:dp}
    For a randomized algorithm $M:U \rightarrow Range(M)$ and $\epsilon,\delta\geq 0$, if for any two neighboring data $u$ and $u'$, it holds for $\forall Z \subset Range(M),$
    $
        \Pr[M(u)\in Z] \leq e^{\epsilon} \Pr[M(u')\in Z]+\delta,
    $
    then algorithm $M$ is said to satisfy $(\epsilon,\delta)$-differentially privacy. If $\delta = 0$, $M$ is called $\epsilon$-differentially private.
\end{definition}

Finally, we introduce the formal definition of the random response mechanism.

\begin{definition} [Random response mechanism] \label{def:random_response}
Let $g \in \{0, 1\}^m$ denote the $m$ bit array in the Bloom filter. For any $j \in [m]$, let $\wt{g}[j]$ denote the perturbed version of $g[j]$, using the random response mechanism. Namely, for any $j \in [m]$, we have
\begin{align*}
    \Pr [\wt{g}[j] = y] = 
    \begin{cases}
        e^{\epsilon_0} / (e^{\epsilon_0} + 1),  & y = g[j] \\
        1 / (e^{\epsilon_0} + 1), & y = 1 - g[j]
    \end{cases}
\end{align*}
\end{definition}

Let $a = e^{\epsilon_0} / (e^{\epsilon_0} + 1), b = 1 / (e^{\epsilon_0} + 1)$. Since $a / b = e^{\epsilon_0}$, this implies random response can achieve $\epsilon_0$-DP. 

\begin{algorithm}[!ht]\caption{Differentially Private Bloom Filter}\label{alg:init}
\begin{algorithmic}[1]
\State {\bf data structure } \textsc{DPBloomFilter} \Comment{Theorem~ \ref{thm:query_privacy:informal}, \ref{thm:dpbloom_true_accuracy:informal}, \ref{thm:running_complexity}}
\State  
\State {\bf members}
\State \hspace{4mm} $[n]$ is the set universe
\State \hspace{4mm} $k$ is the number of hash functions
\State \hspace{4mm} Let $g \in \{0,1\}^{m}$.
\State \hspace{4mm} Let $h_i : [n] \rightarrow [m]$ for each $i \in [k]$
\State {\bf end members}
\State
\Procedure{Init}{$A \subset [n], k \in \mathbb{N}_+, m \in \mathbb{N}_+$} \Comment{Lemma~\ref{lem:init_time}}
    \State Let $m$ denote the length of the filter
    \State We pick $k$ random hash functions, say they are $h_1, h_2, \cdots, h_k$, for each $i \in [k]$, $h_i : [n] \rightarrow [m]$
    \State Set every entry of $g$ to $0$.
    \State Let $N = F^{-1}(1 - \delta)$, and $\epsilon_0 := \epsilon / N$ 
    \For{$x \in A$}
        \For{$i=1 \to k$}
            \State Let $j\gets h_i[x]$
            \State $g[j] \gets 1$
        \EndFor
    \EndFor 
    \For{$j=1 \to m$}
        \State $\wt{g}[j] \gets g[j]$ with probability $\frac{e^{\epsilon_0}}{ e^{\epsilon_0} + 1}$
        \State $\wt{g}[j] \gets 1 - g[j]$ with probability $\frac{1}{ e^{\epsilon_0} + 1}$
    \EndFor    
\EndProcedure
\State
\Procedure{Query}{$y \in [n]$} \Comment{Lemma~\ref{lem:query_time}, Theorem~\ref{thm:query_privacy:informal}, Theorem~\ref{thm:dpbloom_true_accuracy:informal}}
    \For{$i = 1 \to k$}
        \State Let $j\gets h_i[y]$
        \If{$\wt{g}[j] \neq 1 $}
            \State \Return $\mathsf{false}$
        \EndIf
    \EndFor
    \State \Return $\mathsf{true}$
\EndProcedure
\State

\State {\bf data structure}
\end{algorithmic}
\end{algorithm}


\section{Main Results}\label{sec:main_result}

In Section~\ref{sec:mr_privacy}, we will provide the privacy of our algorithm. Then, we will examine the utility implications of our algorithm applying a random response mechanism. 
In Section~\ref{sec:main_result:utility}, we introduce the utility guarantees of our algorithm.
In Section~\ref{sec:main_result:running_complexity}, we demonstrate that DPBloomfilter does not import the running complexity burden to the standard Bloom filter.

\subsection{Privacy for DPBloomfilter}\label{sec:mr_privacy}

Algorithm~\ref{alg:init} illustrates the application of the random response mechanism to the standard Bloom filter, thereby accomplishing differential privacy. In detail, once the Bloom filter is initialized, each bit in the $m$-bit array is independently toggled with a probability of $\frac{1}{\epsilon_0 + 1}$. Our algorithm will ensure that modifications to any element within the dataset are protected to a degree, as the DPBloomfilter maintains the privacy of the altered element. Then, we present the Theorem demonstrating that our algorithm is $(\epsilon,\delta)$-DP.

\begin{theorem}[Privacy for Query, informal version of Theorem~\ref{thm:query_privacy:formal}]\label{thm:query_privacy:informal}
Let $N := F_W^{-1}(1 - \delta)$ and $\epsilon_0 = \epsilon / N$.
Then, we can show,
the output of \textsc{Query} procedure of Algorithm~\ref{alg:init} achieves $(\epsilon, \delta)$-DP. 
\end{theorem}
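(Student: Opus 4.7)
The plan is to reduce $(\epsilon,\delta)$-DP of the Query output to $(\epsilon,\delta)$-DP of the perturbed bit array $\tilde{g}$, and then establish the latter by combining a per-bit random response guarantee with a high-probability bound on the sensitivity. Since Query on a fixed input $y$ is a deterministic function of $\tilde{g}$ (and the hash functions $h_1,\ldots,h_k$ are shared between the two neighboring runs, so can be coupled), the post-processing property of DP reduces the problem to proving $(\epsilon,\delta)$-DP for the release of $\tilde{g}$.

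The core step is a sensitivity analysis conditioned on the hash functions. For neighboring datasets $A,A'$ differing in a single element $x$, let $W$ denote the number of positions $j \in [m]$ at which the pre-perturbation arrays $g$ and $g'$ disagree. Since only $x$ differs, $W$ is at most the number of distinct values in $\{h_1(x),\ldots,h_k(x)\}$, and in particular $W \leq k$. I would then factorize $\Pr[\tilde{g} = z]$ over $j \in [m]$ using the independence of the random response across bits in the Init loop. Positions where $g[j]=g'[j]$ contribute identical factors that cancel in the ratio $\Pr[\tilde{g}=z]/\Pr[\tilde{g}'=z]$; each disagreeing position contributes a factor at most $a/b = e^{\epsilon_0}$ by Definition~\ref{def:random_response}. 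Thus, conditioned on $W=w$, the release of $\tilde{g}$ is $(w\epsilon_0,0)$-DP.

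Next, I would convert this instance-dependent guarantee into $(\epsilon,\delta)$-DP by thresholding $W$. Choosing $N = F_W^{-1}(1-\delta)$ ensures $\Pr[W \leq N] \geq 1-\delta$, and with $\epsilon_0 = \epsilon/N$ the conditional DP parameter equals $\epsilon$ on the event $\{W \leq N\}$. A standard splitting argument then gives, for any measurable $Z$,
\begin{align*}
\Pr[\tilde{g}(A) \in Z] \leq e^{\epsilon}\Pr[\tilde{g}(A') \in Z \mid W \leq N]\Pr[W \leq N] + \Pr[W > N] \leq e^{\epsilon}\Pr[\tilde{g}(A') \in Z] + \delta,
\end{align*}
which is exactly $(\epsilon,\delta)$-DP for $\tilde{g}$; post-processing then transfers the guarantee to Query.

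The main obstacle will be pinning down $W$ and its CDF $F_W$ so that $N$ is a \emph{uniform} sensitivity threshold across all neighboring pairs. Concretely, $W$ depends both on the random hash functions and on the other elements of the dataset (which can already pre-set some target bits and hence reduce disagreement). The cleanest route is to show that $W$ is stochastically dominated by the number of distinct values among $k$ i.i.d.\ uniform samples from $[m]$, and to define $F_W$ as this worst-case distribution so the quantile $N$ serves as a valid instance-independent bound. A secondary subtlety is carefully justifying the coupling of the hash functions between the $A$ and $A'$ executions when invoking post-processing; this is immediate here because the hash draws are data-independent.
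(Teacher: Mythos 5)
Your proposal is correct and takes essentially the same route as the paper: a per-bit ratio bound of $e^{\pm\epsilon_0}$ from the random response, a product over the set $S$ of disagreeing bits giving privacy loss $|S|\epsilon_0$, and the choice $N = F_W^{-1}(1-\delta)$, $\epsilon_0 = \epsilon/N$ so that the loss is at most $\epsilon$ except on an event of probability $\delta$. If anything you are more careful than the paper, which stops at ``the loss is at most $\epsilon$ with probability $1-\delta$'' without writing out the splitting inequality or invoking post-processing to pass from $\wt{g}$ to the \textsc{Query} output; the only slip is your claim $W \le k$ --- under the substitution model the paper actually analyzes (an element $x$ of $A$ replaced by $x'$), both $Y_x$ and $Y_{x'}$ can contribute disagreeing bits, so the worst-case bound is $2k$, as in Lemma~\ref{lem:distribution_of_W}.
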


Theorem~\ref{thm:query_privacy:informal} shows that our DPBloomfilter in Algorithm~\ref{alg:init} is $(\epsilon, \delta)$-DP. 
Our main technique leverages the single-bit random response technique to enhance the privacy properties of the traditional Bloom filter by composition rule (Lemma~\ref{lem:pre_com_lem}). 

\subsection{Utility for DPBloomfilter}\label{sec:main_result:utility}

Despite the introduction of privacy-preserving mechanisms, our algorithm still ensures that the utility of the Bloom Filter remains acceptable. This is achieved through careful calibration of the Random Response technique parameters, balancing the need for privacy with the requirement for accurate set membership queries. 
Here, we present the theorem for the entire utility loss between the output of our algorithm and ground truth.

\begin{theorem}[Accuracy (compare DPBloom with true-answer) for Query, informal version of Theorem~\ref{thm:dpbloom_true_accuracy:formal}]\label{thm:dpbloom_true_accuracy:informal}
Let $z \in \{0,1\}$ denote the true answer for whether $x \in A$. 
Let $\wh{z} \in \{0,1\}$ denote the answer for whether $x \in A$ output by Bloom Filter.
Let $\alpha: = \Pr[ z = 0 ] \in [0,1]$, $t := e^{\epsilon_0} / (e^{\epsilon_0} + 1)$, and  $\delta_{\mathrm{err}} > 0$.
Then, we can show 
\begin{align*}
\Pr[ \wt{z} = z ] \geq \delta_{\mathrm{err}}\cdot\alpha\cdot(1-t-t^{k}) + \alpha \cdot t.
\end{align*}
\end{theorem}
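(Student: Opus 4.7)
The plan is to decompose $\Pr[\wt{z} = z]$ according to the ground truth $z$ (using $\alpha = \Pr[z = 0]$) and, in the $z = 0$ branch, further condition on the standard Bloom filter verdict $\wh{z}$. Because the random response of Definition~\ref{def:random_response} flips each bit of $g$ independently, I can reason locally about the $k$ queried positions $g[h_1(x)], \ldots, g[h_k(x)]$.

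First I would handle the $z = 1$ case. By the \textsc{Init} procedure of Algorithm~\ref{alg:init}, every bit $g[h_i(x)]$ is set to $1$, so $\wt{z} = 1$ iff none of these $k$ bits is flipped, yielding $\Pr[\wt{z} = 1 \mid z = 1] = t^k$. This contribution is non-negative and can be dropped at the end to simplify the bound.

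Next I would handle the $z = 0$ case by splitting on $\wh{z}$. When $\wh{z} = 1$ (a standard Bloom filter false positive), all $k$ queried bits of $g$ equal $1$, so $\wt{z} = 0$ requires at least one flip among them, occurring with probability $1 - t^k$. When $\wh{z} = 0$, at least one of the $k$ queried bits of $g$ equals $0$, and the probability that this bit is reported unflipped as $0$ is $t$, so $\Pr[\wt{z} = 0 \mid z = 0, \wh{z} = 0] \geq t$. Letting $\delta_{\mathrm{err}}$ stand for the role of $\Pr[\wh{z} = 1 \mid z = 0]$ (the standard Bloom filter false-positive rate) and combining the two subcases gives
\begin{align*}
\Pr[\wt{z} = 0 \mid z = 0] \geq (1 - \delta_{\mathrm{err}})\cdot t + \delta_{\mathrm{err}}\cdot (1 - t^k) = t + \delta_{\mathrm{err}}\cdot(1 - t - t^k).
\end{align*}
Multiplying by $\alpha = \Pr[z = 0]$ and discarding the non-negative $(1-\alpha) t^k$ term coming from the $z = 1$ branch recovers the claimed inequality.

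The main obstacle will be pinning down the precise role of $\delta_{\mathrm{err}}$ in the formal statement, in particular whether it is literally the standard Bloom filter false-positive probability, an upper/lower bound on it, or a threshold parameter entering via the hash-function analysis of Section~\ref{sec:pre_def_bf}; the direction of the resulting inequality then needs to be checked consistently against the possibly negative factor $1 - t - t^k$. A secondary subtlety is the independence argument used to write $\Pr[\wt{z} = 1 \mid z = 1] = t^k$: it assumes the $k$ accessed positions are distinct, which holds with high probability but not always, so either I would argue that hash collisions only help (the product becomes a higher power and the bound still follows) or I would condition on distinctness and absorb the collision probability into an additional lower-order slack. Once these points are fixed, the remainder is elementary linearity and independent coin flips.
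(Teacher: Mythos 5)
Your proposal is correct and follows essentially the same route as the paper: the four-term decomposition over $(z,\wh{z})$, the bounds $\Pr[\wt{z}=1\mid\wh{z}=1]=t^k$, $\Pr[\wt{z}=0\mid\wh{z}=0]\geq t$, and $\Pr[\wt{z}=0\mid\wh{z}=1]=1-t^k$, the identification of $\delta_{\mathrm{err}}$ with the standard Bloom filter false-positive rate $\Pr[\wh{z}=1\mid z=0]$ via Lemma~\ref{lem:bloom_true_accuracy:formal}, and discarding the nonnegative $(1-\alpha)t^k$ term are all exactly the paper's argument in Theorem~\ref{thm:dpbloom_true_accuracy:formal}. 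The two caveats you flag (the sign of $1-t-t^k$ when substituting $\delta_{\mathrm{err}}$ for the exact false-positive probability, and hash collisions making the exponent $|Q|\leq k$ rather than $k$) are real subtleties, but the paper's own proof glosses over them in the same way.
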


Theorem~\ref{thm:dpbloom_true_accuracy:informal} shows that when most queries are not in $A$, the above theorem can ensure that the utility of DPBloomfilter has a good guarantee. Namely, in such cases, answers from DPBloomfilter are correct with high probability.

\subsection{Running Complexity of DPBloomfilter}\label{sec:main_result:running_complexity}

Now, we introduce the running complexity for the DPBloomfilter in the following theorem. 

\begin{theorem} [Running complexity of DPBloomfilter] \label{thm:running_complexity}
Let $\mathcal{T}_h$ denote the time of evaluation of function $h$ at any point. 
Then, for the DPBloomfilter (Algorithm~\ref{alg:init}) we have
\begin{itemize}
    \item The running complexity for the initialization procedure is $O(|A| \cdot k \cdot \mathcal{T}_h + m)$.
    \item The running complexity $O(k\cdot \mathcal{T}_h)$ for a single query. 
\end{itemize}
\end{theorem}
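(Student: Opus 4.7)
The plan is to prove both bounds by a direct line-by-line analysis of Algorithm~\ref{alg:init}, summing the per-operation costs within each of the two procedures. The structure of the argument mirrors the classical Bloom filter complexity analysis; the only new ingredient relative to the standard construction is the random-response bit-flip loop at the end of \textsc{Init}, which I must check does not inflate the asymptotic cost beyond the stated bound. Since the perturbed array $\wt{g}$ is stored in exactly the same binary format as $g$, all lookups in \textsc{Query} retain $O(1)$ cost and inherit the classical bound without modification.

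For the initialization procedure \textsc{Init}$(A, k, m)$, I would partition the work into four pieces: (i) initializing the length-$m$ array $g$ entirely to zero, which takes $O(m)$ time; (ii) sampling the $k$ hash functions and computing $N = F^{-1}(1-\delta)$ together with $\epsilon_0 = \epsilon / N$, which I treat as $O(k)$ setup plus $O(1)$ preprocessing (the quantile is a fixed distributional query independent of $|A|$, $k$, and $m$); (iii) the double loop inserting each $x \in A$, which performs $|A| \cdot k$ hash evaluations at cost $\mathcal{T}_h$ each together with $|A| \cdot k$ bit writes of $O(1)$ each, contributing $O(|A| \cdot k \cdot \mathcal{T}_h)$; and (iv) the random-response loop over $j \in [m]$, which draws one Bernoulli sample and performs one assignment to $\wt{g}[j]$ in $O(1)$ per index, contributing another $O(m)$. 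Summing these four pieces and absorbing the lower-order $O(k)$ setup term yields the claimed bound $O(|A| \cdot k \cdot \mathcal{T}_h + m)$.

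For \textsc{Query}$(y)$, the procedure is a single loop with exactly $k$ iterations; each iteration evaluates $h_i(y)$ at cost $\mathcal{T}_h$ and then performs a single $O(1)$ lookup of $\wt{g}[h_i(y)]$ to decide whether to return \textsf{false} or continue. In the worst case no early termination occurs, so the total cost is $O(k \cdot \mathcal{T}_h)$, matching the standard Bloom filter query bound.

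The proof is essentially bookkeeping, so there is no substantial technical obstacle; the one point I would be careful about is justifying that computing $F^{-1}(1-\delta)$ is treated as $O(1)$ preprocessing (rather than something that depends on $m$), and that reading $\wt{g}$ in \textsc{Query} incurs no extra cost beyond reading $g$. With these observations in place, both bullets of the theorem follow directly from the per-line accounting above.
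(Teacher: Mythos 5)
Your proposal is correct and follows essentially the same route as the paper, which proves the theorem by combining Lemma~\ref{lem:init_time} (insertion loop costs $O(|A|\cdot k\cdot \mathcal{T}_h)$ plus $O(m)$ for the bit-flip pass) with Lemma~\ref{lem:query_time} ($k$ hash evaluations per query). Your extra remark about treating the quantile computation $F^{-1}(1-\delta)$ as $O(1)$ preprocessing is a reasonable point of care that the paper simply leaves implicit.
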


\begin{proof}
It can be proved by combining Lemma~\ref{lem:init_time} and \ref{lem:query_time}. 
\end{proof}

Our Theorem~\ref{thm:running_complexity} shows that DPBloomfilter not only addresses the critical need to protect the privacy of elements stored with Bloom filter but also ensures that the data structure's utility remains acceptable, with minimal impact on its computational efficiency.
By keeping the running time within the same order of magnitude as the standard Bloom filter, our approach is practical for real-world applications requiring fast and scalable set operations.

\section{Technical Overview}\label{sec:tech_overview}

In this section, we provide an overview of the techniques we used in proving our theoretical results. 

\subsection{Privacy Guarantees of Single Bit}\label{sec:tec_privacy_sb}

To accomplish differential privacy, Algorithm~\ref{alg:init} applies a random response mechanism to each bit of the standard Bloom Filter. In this section, we aim to examine the privacy guarantees for a single bit of our algorithm.

Recall that in Definition~\ref{sec:pre_def_bf}, for dataset $A \subset [n]$, we use $g[j]$  to denote the $j$-th element of array output by standard Bloom Filter. Here, we use $\wh{g}[j]$ to denote the $j$-th element of array output by DPBloomfilter. Similarly, for any neighboring dataset $A' \subset [n]$, we use $g'[j]$ and $\wh{g}'[j]$ to denote the $j$-th element of array output by standard Bloom Filter and DPBloomfilter. 
To examine the privacy guarantees for the $i$-th bit, we must consider two distinct cases.

{\bf Case 1}. Suppose $g'[j] = g[j]$, then we can obtain (See also Lemma~\ref{lem:eps0_DP:formal}) that  for all $v \in \{0,1\}$, we have
$
    \frac{\Pr[\wt{g}[j]=v]}{\Pr[\wt{g}'[j]=v]} = 1.
$

{\bf Case 2}. Suppose $g'[j] \neq g[j]$, then we can obtain (See also Lemma~\ref{lem:eps0_DP:formal}) that for all $v \in {0,1}$, we have
$
    e^{-\epsilon_{0}} \leq \frac{\Pr[\wt{g}[j]=v]}{\Pr[\wt{g}'[j]=v]} \leq e^{\epsilon_{0}}.
$

By combining the above two cases, we can demonstrate the privacy guarantees of a single bit for our algorithm.

\begin{lemma} [Differential Privacy for single Bit, informal version of Lemma~\ref{lem:eps0_DP:formal}] \label{lem:eps0_DP:informal}
Let $\epsilon_0 \geq 0$ and $\wt{g}[i] \in \{0,1\}$ be the $i$-th element of array output by DPBloomfilter. 
Then, we can show that, for all
$j \in [m]$, $\wt{g}[j]$ is $\epsilon_0$-DP. 
\end{lemma}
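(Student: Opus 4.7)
The plan is to establish $\epsilon_0$-differential privacy of the single bit $\wt{g}[j]$ by directly bounding the likelihood ratio $\Pr[\wt{g}[j]=v]/\Pr[\wt{g}'[j]=v]$ for every $v\in\{0,1\}$ and every pair of neighboring datasets $A, A'$, then invoking Definition~\ref{def:dp} with $\delta=0$. Because the hash functions are deterministic once fixed at initialization, the underlying bits $g[j]$ and $g'[j]$ produced by the standard Bloom-filter insertion are deterministic functions of $A$ and $A'$, so all randomness in $\wt{g}[j]$ and $\wt{g}'[j]$ comes from the coin flip of the random response mechanism (Definition~\ref{def:random_response}). This reduces the proof to a case split on whether the single differing element of $A$ vs.\ $A'$ actually changes the $j$-th bit of the pre-perturbation array.

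First I would fix $j\in[m]$ arbitrarily and split on the value of $(g[j], g'[j])$. In Case 1, $g[j]=g'[j]$; this is the easy branch, since then $\wt{g}[j]$ and $\wt{g}'[j]$ are obtained by applying identical Bernoulli perturbations to the same bit, so their distributions coincide and the ratio is exactly $1$, which is $\leq e^{\epsilon_0}$. In Case 2, $g[j]\neq g'[j]$; without loss of generality assume $g[j]=1$ and $g'[j]=0$ (the opposite sub-case is symmetric by swapping $A$ and $A'$). Then I would compute both conditional probabilities directly from Definition~\ref{def:random_response}: for $v=1$, $\Pr[\wt{g}[j]=1]=e^{\epsilon_0}/(e^{\epsilon_0}+1)$ while $\Pr[\wt{g}'[j]=1]=1/(e^{\epsilon_0}+1)$, giving ratio $e^{\epsilon_0}$; and for $v=0$ the roles flip to give ratio $e^{-\epsilon_0}$.

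Combining the two cases, I get $\Pr[\wt{g}[j]=v]\leq e^{\epsilon_0}\Pr[\wt{g}'[j]=v]$ for every $v\in\{0,1\}$ and every pair of neighboring inputs, which is precisely the $\epsilon_0$-DP condition (with $\delta=0$) for the single-bit output $\wt{g}[j]$. The two inline displays in the technical overview (Case 1 with ratio $1$, Case 2 with ratio bounded by $e^{\pm\epsilon_0}$) are essentially the two computations above, so the lemma follows by packaging them under the DP definition.

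Honestly, there is no serious obstacle here: the mechanism is applied bitwise and independently of everything else, so once the case analysis is set up, both branches reduce to arithmetic from the definition of $a = e^{\epsilon_0}/(e^{\epsilon_0}+1)$ and $b = 1/(e^{\epsilon_0}+1)$ with $a/b = e^{\epsilon_0}$. The only mild subtlety worth flagging in the write-up is justifying the reduction to a bitwise argument, i.e.\ noting that since $g[j], g'[j]$ are deterministic given $A, A'$ and the perturbation coin for bit $j$ is independent of all other bits, it suffices to analyze the marginal distribution of $\wt{g}[j]$ conditioned on $g[j]$ and $g'[j]$, rather than dragging in the joint distribution over all $m$ bits.
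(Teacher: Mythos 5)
Your proposal is correct and takes essentially the same approach as the paper's proof of Lemma~\ref{lem:eps0_DP:formal}: the identical case split on whether $g[j] = g'[j]$, with the likelihood ratios computed directly from Definition~\ref{def:random_response} to give ratio $1$ in the first case and ratios $e^{\epsilon_0}$, $e^{-\epsilon_0}$ in the second. The only difference is your explicit remark justifying the reduction to the marginal distribution of the single bit, which the paper leaves implicit.
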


\subsection{Privacy Guarantees of DPBloomFilter}\label{sec:tec_privacy_dp}
Here, we comprehensively analyze the DP guarantees for our DPBloomFilter. Recall that in Definition~\ref{def:bloom_filter}, for dataset $A$, we use $g$ to denote the array output by standard Bloom Filter. Here, we use $\wt{g}$ to denote the array output by DPBloomfilter. Similarly, for any neighboring dataset $A'$, we use $g'$ and $\wh{g}'$ to denote the array output by standard Bloom Filter and DPBloomfilter, respectively.
Here, we consider the set of indices $j$ within the range $m$ where the value of $g[j]$ and $g'[j]$ differs, which is defined as 
$
    S := \{j \in [m] : g[j] \neq g'[j]\}.
$
Thus, the set of indices $j$ where the value of $g[j]$ and $g'[j]$ are the same can be defined as
$
    \ov{S} := [m] \backslash S.
$

We can use the result of privacy guarantees of a single bit in Section~\ref{sec:tec_privacy_sb}, for any $j \in S$ and $v \in \{0,1\}$, we have
$
\frac{\Pr[\wt{g}[j]=v]}{\Pr[\wt{g}'[j]=v]}  = 1,
$
and for any $j \in \ov{S}$ and $v \in \{0,1\}$, we have
\begin{align*}
e^{-\epsilon_{0}} \leq \frac{\Pr[\wt{g}[j]=v]}{\Pr[\wt{g}'[j]=v]} \leq e^{\epsilon_{0}}.
\end{align*}

By applying the composition lemma (refer to Lemma~\ref{lem:pre_com_lem}) , we obtain the following for any $Z \in \{0,1\}^m$,
\begin{align}\label{equ:epsilon_0_bound}
|\ln{\frac{\Pr[\wt{g} = Z]}{\Pr[\wt{g}' = Z]}}| \leq  |S|\epsilon_0.
\end{align}
Here, we define $W := |S|$ for convenience. To get a better bound for Equation~\ref{equ:epsilon_0_bound}, we need to calculate the probability distribution function of the random variable $W$. Before that, we need to define two random variables we will use. Firstly, we define $Y$ as the set of distinct values among the $k$ hash values generated by the standard Bloom filter considering one $x \in [n]$. Then we consider two data $x, x' \in [n]$. We define $Z$ as the set of distinct values in $Y_{x} \cup Y_{x'}$.

Then firstly we proceed to calculate the distribution of $|Y|$ (see details in Lemma~\ref{lem:distribution_of_Y}), we can show for any $y = 1,2,\dots,k$
\begin{align*}
    \Pr[|Y| = y] 
    = & ~ 
    \begin{cases}
        1 / m^{k-1},  & y = 1 \\
        \binom{m}{y} (\frac{y}{m})^k
        - \binom{m - i}{y -i} \sum_{i=1}^{k-1} \Pr[Y = i]  , & y = 2, \cdots , k
    \end{cases}
\end{align*}
Given the probability of $|Y|$, we can calculate the conditional probability of $|Z|$ conditioned on $|Y_{x}| = a$ and $|Y_{x'}| = b$, where $a,b \in [k]$ (see details in Lemma~\ref{lem:distribution_of_Z})
\begin{align*}
    \Pr[|Z| = & ~ z | |Y_x| = a, |Y_{x'}| = b] \\
    = & ~ \frac{A_m^a \cdot \binom{b}{t} \cdot A_{m - a}^t \cdot A_{a}^{b-t}}{A_m^a \cdot A_m^b}.
\end{align*}
Finally, we use the property of union probability. We can calculate the probability of $W$ (see details in Lemma~\ref{lem:distribution_of_W}). 
Recall the notations in Section~\ref{sec:preliminary}, 
we use $F_{X}^{-1}$ to denote the $1-\delta$ quantile of the Cumulative Distribution Function $F_{X}(x)$ of random variable $X$. 
Here, we define
$
    N:= F_{W}^{-1}(1-\delta)
$
Hence, by the properties of the quantile function, we have
$
    \Pr[N \leq W] = 1-\delta.
$
By choosing the appropriate value of $\epsilon_0 = \epsilon/N$, we have
$
|\ln{\frac{\Pr[\wt{g} = Z]}{\Pr[\wt{g}' = Z]}}| \leq W\frac{\epsilon}{N}.
$
Then we have, with probability $1-\delta$,
$
    |\ln{\frac{\Pr[\wt{g} = Z]}{\Pr[\wt{g}' = Z]}}| \leq \epsilon.
$ 
w
Then, we can demonstrate the privacy guarantees for DPBloomfilter (see also Theorem~\ref{thm:query_privacy:informal}).

\subsection{Utility Guarantees of DPBloomfilter}\label{sec:tec_utility_dp}
This section will present a comprehensive analysis of the utility guarantees for DPBloomfilter.
We start by introducing the following conditions for the Utility guarantee of DPBloomFilter.
\begin{condition} \label{con:utility_condition}
We need the following conditions for Utility guarantees of DPBloomfilter:
\begin{itemize}
    \item \textbf{Condition 1.} Assume that a hash function selects each array position with equal probability.
    \item \textbf{Condition 2.} Let $z \in \{0,1\}$ denote the ground truth for whether an element $y \in A$.
    \item \textbf{Condition 3.} Let $\wh{z} \in \{0,1\}$ denote the answer output by standard Bloom Filter for whether an element $y \in A$.
    \item \textbf{Condition 4.} Let $\wt{z} \in \{0,1\}$ denote the answer output by DPBloomfilter for whether an element $y \in A$
    \item \textbf{Condition 5.} Let $\alpha:=\Pr[z=0] \in [0,1]$
    \item \textbf{Condition 6.} Let $t := e^{\epsilon_0} / (e^{\epsilon_0} + 1)$. 
\end{itemize}
    
\end{condition}

Firstly, we proceed to derive the utility of the standard Bloom Filter by calculating 
\begin{align*}
    \Pr[\wh{z} = z] = 1 - \Pr[\wh{z} = 1 | z = 0] \Pr[z=0].
\end{align*}
The above equation comes from the fact that Bloom Filter will not introduce a false negative. After the initialization process of Bloom Filter, the probability of one certain bit is not set to $1$ is (see also Lemma~\ref{lem:bloom_true_accuracy:formal})
$
    (1-\frac{1}{m})^{|A|k} \geq e^{-2|A|k/m}.
$
A false positive occurs when, for all $i \in [k]$, the elements $g[h_i(y)]$ are all set to $1$ after initialization. In this case, we have:
\begin{align*}
    \Pr[\wh{z} = 1 | z = 0] = & ~( 1 - (1 - \frac{1}{m})^{|A|k})^k \\
    \leq & ~ (1 - e^{-2|A|k/m})^k.
\end{align*}
Therefore, we have
$
    \Pr[\wh{z} = z] \geq 1 - (1 - e^{-2|A|k/m})^{k} \alpha.
$
Further if $m = \Omega(|A|k)$ and $k = \Theta(log(\alpha/\delta_{err}))$, we have
$
    \Pr[\wh{z} = z] = 1 - \delta_{\mathrm{err}} \cdot \alpha.
$
\begin{lemma} [Accuracy for query of Standard Bloom filter, informal version of Lemma~\ref{lem:bloom_true_accuracy:formal}]\label{lem:bloom_true_accuracy:informal}
If Condition~\ref{con:utility_condition} holds, we have
$
    \Pr [ \wh{z} = z ] \geq 1 - (1 - e^{-2|A| k / m})^k \cdot \alpha.
$
Further if $m = \Omega(|A| k)$ and $k = \Theta(\log(1/\delta_{err}))$, we have
$
     \Pr [ \wh{z} = z ] \geq 1 - \delta_{\mathrm{err}} \cdot \alpha.
$
\end{lemma}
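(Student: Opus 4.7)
The plan is to exploit the defining property of the standard Bloom filter, namely that it has \emph{no false negatives}: whenever $y \in A$ (so $z = 1$), every bit $g[h_i(y)]$ is set to $1$ during \textsc{Init}, hence the Query procedure returns $\mathsf{true}$ and $\wh{z} = 1 = z$. Consequently the only error event is a false positive, and I can decompose
\begin{align*}
\Pr[\wh{z} \neq z]
= \Pr[\wh{z} = 1,\ z = 0]
= \Pr[\wh{z} = 1 \mid z = 0] \cdot \Pr[z = 0]
= \alpha \cdot \Pr[\wh{z} = 1 \mid z = 0].
\end{align*}

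Next I would bound the false-positive probability. Under Condition~1, each of the $|A|\cdot k$ hash evaluations performed during \textsc{Init} lands in any fixed index $j$ with probability $1/m$, independently across evaluations. Hence for each $j$,
\begin{align*}
\Pr[g[j] = 0] = (1 - 1/m)^{|A| k} \geq e^{-2|A|k/m},
\end{align*}
using the elementary inequality $1 - x \geq e^{-2x}$ valid for $x \in [0, 1/2]$ (so for any $m \geq 2$). A false positive on query $y$ occurs exactly when all $k$ indices $h_1(y), \ldots, h_k(y)$ have been set to $1$; by the standard bound for Bloom filters,
\begin{align*}
\Pr[\wh{z} = 1 \mid z = 0]
\leq \bigl(1 - (1 - 1/m)^{|A| k}\bigr)^k
\leq \bigl(1 - e^{-2|A|k/m}\bigr)^k.
\end{align*}
Combining this with the decomposition above yields the first inequality $\Pr[\wh{z} = z] \geq 1 - (1 - e^{-2|A|k/m})^k \cdot \alpha$.

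For the sharper second claim, I would choose $m = \Omega(|A|k)$ large enough that $|A|k/m \leq c$ for an absolute constant $c$, so that $e^{-2|A|k/m} \geq e^{-2c} =: \rho \in (0,1)$. Then the false-positive bound becomes $(1-\rho)^k$, and selecting $k = \Theta(\log(1/\delta_{\mathrm{err}}))$ with the constant tuned to $1/\log(1/(1-\rho))$ gives $(1-\rho)^k \leq \delta_{\mathrm{err}}$, yielding $\Pr[\wh{z} = z] \geq 1 - \delta_{\mathrm{err}} \cdot \alpha$. The only subtle point is the inequality $(1 - 1/m)^{|A|k} \geq e^{-2|A|k/m}$ and matching constants in the asymptotic conditions so that the final bound comes out exactly as $\delta_{\mathrm{err}} \cdot \alpha$; all remaining steps are straightforward calculations that mirror the classical Bloom-filter false-positive analysis.
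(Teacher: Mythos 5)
Your proposal follows essentially the same route as the paper's proof: exploit the no-false-negative property to reduce everything to the false-positive probability, compute $\Pr[g[j]=0]=(1-1/m)^{|A|k}\geq e^{-2|A|k/m}$ (the paper justifies this via $(1-1/m)^m\geq e^{-2}$ for $m\geq 2$, which is the same inequality you invoke), and bound the false-positive rate by $(1-e^{-2|A|k/m})^k$. Your handling of the asymptotic second claim is in fact slightly more explicit than the paper's, which asserts it without detail, but there is no substantive difference in approach.
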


We then quantify the error introduced by applying the random response mechanism in the DPBloomfilter by calculating $\Pr[\wt{z} = \wh{z}]$. Using basic probability rules, we have
\begin{align*}
    \Pr[\wt{z} = \wh{z}] = & ~ \Pr[\wt{z}=1|\wh{z}=1]\Pr[\wh{z}=1] +\Pr[\wt{z}=0|\wh{z}=0]\Pr[\wh{z}=0].
\end{align*}

We can compute the following term by using the definition of DPBloomfilter in Algorithm~\ref{alg:init}  (see details in Lemma~\ref{lem:dpbloom_bloom_accuracy:formal})
\begin{align*}
    \Pr[\wt{z}=1|\wh{z}=1] =  (\frac{e^{\epsilon_0}}{e^{\epsilon_0}+1})^k 
    ~~~~\mathrm{and}~~~~
     \Pr[\wt{z}=0|\wh{z}=0] \geq \frac{e^{\epsilon_0}}{e^{\epsilon_0}+1}.
\end{align*}
Here we let $\Pr[\wh{z}=0] = \wh{\alpha}$, note that $\wh{\alpha} = \alpha (1 - \delta_{\mathrm{err}})$. Hence, $\Pr[\wh{z} = 1] = 1 - \Pr[\wh{z}=0] = 1 - \alpha + \alpha \cdot \delta_{err}$. Then we will have (see details in Lemma~\ref{lem:dpbloom_bloom_accuracy:formal})
\begin{align*}
    \Pr[\wh{z} = z] \geq t \cdot \alpha \cdot (1 - \delta_{err}).
\end{align*}

\begin{lemma}[Accuracy (compare DPBloomFilter with Bloom) for Query, informal version of Lemma~\ref{lem:dpbloom_bloom_accuracy:formal}]\label{lem:dpbloom_bloom_accuracy:informal}
If Condition~\ref{con:utility_condition} holds, we can show
Then, we can show
$
\Pr[ \wt{z} = \wh{z}] \geq t \cdot \alpha \cdot (1 - \delta_{err}).
$
\end{lemma}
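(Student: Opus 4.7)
The plan is to follow the decomposition already sketched in the technical overview (Section~\ref{sec:tec_utility_dp}) and fill in the conditional probability computations carefully. First, I would apply the law of total probability, conditioning on the output of the underlying standard Bloom filter, to write
\begin{align*}
\Pr[\wt{z} = \wh{z}] = \Pr[\wt{z}=1 \mid \wh{z}=1]\cdot\Pr[\wh{z}=1] + \Pr[\wt{z}=0 \mid \wh{z}=0]\cdot\Pr[\wh{z}=0].
\end{align*}
The goal is then to lower bound each conditional probability by an expression in $t$ and to compute $\Pr[\wh{z}=0]$ using Lemma~\ref{lem:bloom_true_accuracy:informal}.

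Next, I would compute $\Pr[\wt{z}=1 \mid \wh{z}=1]$. If $\wh{z}=1$, then by definition of the standard Bloom filter query, $g[h_i(y)]=1$ for every $i\in[k]$. By Definition~\ref{def:random_response}, each of these $k$ bits is independently kept as $1$ with probability $t = e^{\epsilon_0}/(e^{\epsilon_0}+1)$ in $\wt{g}$; hence the DPBloomfilter returns $1$ precisely when all $k$ retained bits survive the random response, giving $\Pr[\wt{z}=1 \mid \wh{z}=1] = t^k$. For $\Pr[\wt{z}=0 \mid \wh{z}=0]$, note that $\wh{z}=0$ means at least one index $j^\star\in\{h_1(y),\ldots,h_k(y)\}$ satisfies $g[j^\star]=0$; for that specific index, $\Pr[\wt{g}[j^\star]=0]=t$, and the event $\wt{g}[j^\star]=0$ already forces the DPBloomfilter to return $\mathsf{false}$. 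Hence $\Pr[\wt{z}=0 \mid \wh{z}=0] \geq t$.

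Then I would use the no-false-negative property of the standard Bloom filter to evaluate $\Pr[\wh{z}=0]$. Since $z=1$ implies $\wh{z}=1$, we have $\Pr[\wh{z}=0 \mid z=1]=0$, so
\begin{align*}
\Pr[\wh{z}=0] = \Pr[\wh{z}=0 \mid z=0]\cdot\Pr[z=0] = (1-\delta_{\mathrm{err}})\cdot\alpha,
\end{align*}
where I invoked Lemma~\ref{lem:bloom_true_accuracy:informal} (with the standard rescaling of its error parameter) to identify $\Pr[\wh{z}=0 \mid z=0]\geq 1-\delta_{\mathrm{err}}$.

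Finally, I would combine everything. Since $\Pr[\wt{z}=1 \mid \wh{z}=1]\cdot\Pr[\wh{z}=1]\geq 0$, dropping that term gives
\begin{align*}
\Pr[\wt{z} = \wh{z}] \geq t\cdot\Pr[\wh{z}=0] \geq t\cdot\alpha\cdot(1-\delta_{\mathrm{err}}),
\end{align*}
which is the claimed bound. The bulk of the argument is routine; the only delicate step will be justifying the inequality $\Pr[\wt{z}=0 \mid \wh{z}=0] \geq t$ cleanly, since conditioning on $\wh{z}=0$ leaves the identity and multiplicity of the zero-valued hash positions random. Isolating a single guaranteed zero bit $j^\star$ (e.g.\ the smallest such index among $h_1(y),\ldots,h_k(y)$) and using only its flip probability suffices, which is why the bound is $t$ rather than a larger expression involving $k$ and potentially other zero bits.
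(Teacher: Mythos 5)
Your proposal is correct and follows essentially the same route as the paper's proof of Lemma~\ref{lem:dpbloom_bloom_accuracy:formal}: the same conditioning on $\wh{z}$, the same values $t^k$ and $t$ for the two conditional probabilities, the same identification $\Pr[\wh{z}=0]=\alpha(1-\delta_{\mathrm{err}})$, and the same dropping of the nonnegative $t^k\cdot\Pr[\wh{z}=1]$ term. The only cosmetic difference is that you lower bound $\Pr[\wt{z}=0\mid\wh{z}=0]$ directly via a single guaranteed zero bit, whereas the paper bounds the complement $\Pr[\wt{z}=1\mid\wh{z}=0]\leq 1/(e^{\epsilon_0}+1)$; these are equivalent.
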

Now, we can proceed to examine the utility guarantees of DPBloomfilter by calculating $\Pr[\wt{z} = z]$, i.e., comparing the output of DPBloomfilter with the ground truth for the query. 
By combining the result of the analysis above, we will have (see more details in Theorem~\ref{thm:dpbloom_true_accuracy:formal})
\begin{align*}
    \Pr[ \wt{z} = z ] \geq \alpha \cdot (1-t-t^k)\cdot \delta_{\mathrm{err}}+\alpha\cdot t. 
\end{align*}
Then, we demonstrated the utility guarantees of our algorithm while simultaneously ensuring privacy (see Theorem~\ref{thm:dpbloom_true_accuracy:informal}).
Similar to other differential privacy algorithms, our algorithm encounters a trade-off between privacy and utility, where increased privacy typically results in a reduction in utility, and conversely. An in-depth examination of this trade-off is provided as follows.

\begin{remark} [Trade-off between Privacy and Utility of DPBloomfilter]
An inherent trade-off exists between the privacy and utility guarantees of our algorithm. To ensure privacy, we must lower the value of $\epsilon_0$ in Theorem~\ref{thm:query_privacy:informal}. On the other hand, for utility considerations (in Theorem~\ref{thm:dpbloom_true_accuracy:informal}), we define the lower bound of $\Pr[\wt{z} = z]$ as $u = \alpha(1-t-t^k)\delta_{\mathrm{err}}+\alpha t$
, a reduction in $\epsilon_0$ will lead to a reduction in $t$ then finally result in a reduction in $u$. This, in turn, leads to diminished utility.
\end{remark}

\subsection{Running Time of DPBloomfilter}\label{sec:tec_time_dp}
In this section, we will analyze the running time of our DPBloomfilter. 
Recall in Definition~\ref{def:bloom_filter}, we let $\mathcal{T}_{h}$ denote the computation time per execution for all hash functions. To analyze the algorithm's running time, firstly, we consider the running time of initialization in Algorithm~\ref{alg:init}. 

It contains two steps as follows
\begin{itemize}
    \item \textbf{Step 1.} Let's consider the initialization of the standard Bloom Filter. For a single element $x \in A$, it needs $O(k\cdot \mathcal{T}_h)$ time to compute over $k$ hash functions. And $|A|$ elements need to be inserted. Combining these two facts, it needs $|A|\cdot k \cdot \mathcal{T}_h$ time to initialize the standard Bloom Filter.
    \item \textbf{Step 2.} Let's consider the ``Flip each bit'' part in DPBloomfilter. Since there are $m$ bits in the Bloom Filter, it takes $O(m)$ time to flip each bit.
Hence, it takes $O(|A|\cdot k \cdot \mathcal{T}_{h}+m)$ time to run the initialization function in Algorithm~\ref{alg:init}. (see also in Lemma~\ref{lem:init_time})
\end{itemize}

Then, we consider the running time of a single query in Algorithm~\ref{alg:init}. For each query $y$, the algorithm needs $O(k \cdot \mathcal{T}_{h})$ time to compute the hash values of $y$ over $k$ hash functions. Hence, it takes $O(k \cdot \mathcal{T}_{h})$ time to run each query $y$ in. (see also in Lemma~\ref{lem:query_time})

By combining the two running times together, we can obtain the running time of our entire algorithm is $O(|A|\cdot k \cdot \mathcal{T}_{h}+m)$. This highlights the advantage of our algorithm: it matches the time complexity of a standard Bloom Filter while providing a strong privacy guarantee.

\section{Discussion} \label{sec:discussion}

\paragraph{Why Random Response but not Gaussian or Laplace Noise?}
As mentioned in Section~\ref{sec:related_work}, Gaussian and Laplace noise are two classical mechanisms to achieve differential privacy. 
The advantage of the Laplace mechanism is that its distribution is concentrated on its mean. Under the same privacy budget, it will not introduce too much noise like the Gaussian mechanism due to the long-tail nature of its distribution. The advantage of the Gaussian mechanism is that it has good mathematical properties and makes it easy to analyze the utility of private data structures.
However, the above two mechanisms are not as effective as the random response (flip coin) mechanism when dealing with discrete values. Here, we consider the case where the discrete values are integers. Under certain privacy budgets, the noise added by Gaussian and Laplace mechanisms does not reach the threshold of $0.5$, resulting in attackers being able to remove the noise through rounding operations easily, and the privacy of the data structure no longer exists.
In our case, each bit of the Bloom filter can only be $1$ or $0$, which is consistent with the above situation. Hence, our work only considers the random response mechanism instead of classical Gaussian and Laplace mechanisms.


\paragraph{Why Flip Both \texorpdfstring{$0$}{} and \texorpdfstring{$1$}{}?}
In our work, we apply a random response mechanism to each bit in the Bloom filter, either it is $0$ or $1$. Although this will lead to a certain probability of false negatives in the Bloom filter, we argue that it is necessary to make the Bloom filter differentially private.
Let's consider what will happen if we don't apply a random response mechanism like this. Suppose we only apply random responses to bits that are $1$ in the Bloom filter and leave the bits with 0 untouched. 
Following the notations used in Lemma A, we use $g \in \{0, 1\}^m$ to represent the bit array generated by inserting the original dataset into the Bloom filter and $g' \in \{0, 1\}^m$ to represent the bit array generated by inserting the neighboring dataset into the Bloom filter. 
We use $\wt{g}$ and $\wt{g}'$ to denote their private version, respectively. 
Without loss of generality, for some $j \in [m]$, we assume $g[j] = 1$ and $g'[j] = 0$. 
Since we only apply a random response mechanism on bits with value $1$, then $\Pr[\wt{g}' [j] = 1] = 0$. 
Therefore, we cannot calculate $\Pr[\wt{g} [j] = 1] / \Pr[\wt{g}' [j] = 1]$, since the denominator is $0$. 
Hence, we cannot have any privacy guarantees under this setting. 
Similar situations occur when we apply a random response mechanism on bits with value $0$. We also cannot prove the differential privacy property of the Bloom filter. 
Therefore, we have to apply the random response mechanism on bits either with value $0$ or $1$.



\section{Conclusion and Future Work}\label{sec:conclusion}

In this work, we introduced \textit{DPBloomfilter}, a novel approach that leverages the random response mechanism to ensure the privacy of Bloom filters. To the best of our knowledge, this is the first work that applies random response to achieve differential privacy (DP) in the membership query tasks associated with Bloom filters.

From a privacy standpoint, we have rigorously demonstrated that our method achieves $(\epsilon, \delta)$-DP while retaining the same computational complexity as the standard Bloom filter. Furthermore, our theoretical analyses, complemented by extensive experimental evaluations, confirm that the DPBloomfilter not only upholds strong privacy guarantees but also maintains high utility.

Our results open several promising avenues for future research. In particular, one interesting direction is to explore more refined trade-offs between privacy and utility, potentially by further optimizing the random response mechanism to minimize any impact on accuracy. 
In summary, the DPBloomfilter integrates differential privacy into the Bloom filter data structure, and we anticipate our work can advance the state-of-the-art in privacy-preserving data structures.

\ifdefined\isarxiv
\bibliographystyle{alpha}
\bibliography{ref}
\else
\bibliography{ref}
\fi


\clearpage
\appendix
\thispagestyle{empty}
\onecolumn

\begin{center}
    \textbf{\LARGE Appendix }
\end{center}


{\bf Roadmap.} The Appendix organizes as follows:
In Section~\ref{sec:appendix_basic_tools}, we introduce the notations used in the paper and differential privacy tools.
In Section~\ref{sec:appendix_quantile_proof}, we elaborate the derivations for the closed-form distribution of the random variable $W$, where $N$ is the $1 - \delta$ quantile of $W$. 
Section~\ref{sec:appendix_privacy_guarantees} contains the proof of privacy guarantees for DPBloomfilter.
Section~\ref{sec:appendix_utility} presents a detailed analysis of utility guarantees for DPBloomfilter.
Section~\ref{sec:appendix_running_time} restates the analysis results of running time for DPBloomfilter. 


\section{Basic Tools} \label{sec:appendix_basic_tools}
In this section, we display the notations and basic tools for a better understanding of the readers. In Section~\ref{sec:appendix:notation},  we introduce the notations used in this paper. 
In Section~\ref{sec:pre_def_bc}, we provide an essential basic composition Lemma for Differential Privacy.
\subsection{Notations}\label{sec:appendix:notation}
In this section, we describe the notations we use in this paper.

For any positive integer $n$, let $[n]$ denote the set $\{1, 2, \cdots , n\}$. We use $\E[]$ to denote the expectation operator and $\Pr[]$ to denote probability. We use $n!$ to denote the factorial of integer $n$. We use $A_{m}^{n}:=\frac{m!}{(m-n)!}$ to denote the number of permutation ways to choose $n$ elements from $m$ elements considering the order of selection. We use $\binom{m}{n}:=\frac{m!}{n!(m-n!)}$ to denote the number of combination ways to choose $n$ elements from $m$ elements without considering the order of selection. We use $F_{X}(x)$ to denote the Cumulative Distribution Function (CDF) of a random variable $X$ and use $F_{X}^{-1}(1-\delta)$ to denote the $1-\delta$ quantile of $F_{X}(x)$.

\subsection{Basic Composition of Differential Privacy}\label{sec:pre_def_bc}
If multiple differential privacy algorithms are involved, a composition rule becomes necessary. This section presents the simplest form of composition, as stated as follows:
\begin{lemma}[Basic composition, \cite{gkk+23}]\label{lem:pre_com_lem}
    Let $M_1$ be an $(\epsilon_1,\delta_1)$-DP algorithm and $M_2$ be an $(\epsilon_2,\delta_2)$-DP algorithm. 
    Then $M(X) = (M_1(X),M_2(M_1(X),X)$ is an $(\epsilon_1+\epsilon_2,\delta_1+\delta_2)$-DP algorithm.
\end{lemma}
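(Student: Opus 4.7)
The plan is to verify the $(\epsilon_1+\epsilon_2, \delta_1+\delta_2)$-DP guarantee for $M(X) = (M_1(X), M_2(M_1(X), X))$ by fixing an arbitrary neighboring pair $X, X'$ and an arbitrary measurable target set $S$ in the joint range of $M$, and bounding $\Pr[M(X) \in S]$ directly in terms of $\Pr[M(X') \in S]$. The backbone of the argument is the chain-rule-style decomposition
\begin{align*}
\Pr[M(X) \in S] = \sum_{y_1} \Pr[M_1(X) = y_1] \cdot \Pr[M_2(y_1, X) \in S_{y_1}],
\end{align*}
where $S_{y_1} := \{ y_2 : (y_1, y_2) \in S \}$ is the $y_1$-slice of $S$. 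This decomposition lets the two privacy budgets be spent independently: first on the inner factor $\Pr[M_2(y_1, X) \in S_{y_1}]$, and then on the outer factor $\Pr[M_1(X) = y_1]$.

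For the inner factor I would apply the $(\epsilon_2, \delta_2)$-DP of $M_2(y_1, \cdot)$ pointwise in $y_1$, yielding the upper bound $e^{\epsilon_2} \Pr[M_2(y_1, X') \in S_{y_1}] + \delta_2$. For the outer step I would view $f(y_1) := \Pr[M_2(y_1, X') \in S_{y_1}]$ as a $[0,1]$-valued test function and use the standard extension of $(\epsilon_1, \delta_1)$-DP from events to bounded expectations, namely $\E[f(M_1(X))] \le e^{\epsilon_1} \E[f(M_1(X'))] + \delta_1$, which follows from writing $\E[f(Y)] = \int_0^1 \Pr[f(Y) \ge t]\, \d t$ and applying the event-level DP bound at each superlevel set. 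Substituting back, the inner $\delta_2$ contribution sums to exactly $\delta_2$ (the $\Pr[M_1(X)=y_1]$ weights are a probability distribution), and the inner $e^{\epsilon_2}$ contribution reduces to bounding $\E[f(M_1(X))]$ via the $M_1$ step.

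The main obstacle is obtaining the clean $\delta_1+\delta_2$ additive slack rather than the loose $e^{\epsilon_2}\delta_1+\delta_2$ that a naive multiplicative chain would produce. To recover the tight bound I would invoke the equivalent good-event reformulation of $(\epsilon,\delta)$-DP: each $(\epsilon_i, \delta_i)$-DP mechanism admits a partition of its output space into a bad event $B_i$ of probability at most $\delta_i$ (under both $X$ and $X'$) and a good complement on which the pointwise likelihood ratio is uniformly bounded by $e^{\epsilon_i}$. The bad event for the composed mechanism is then $\{ y_1 \in B_1 \} \cup \{ y_1 \notin B_1,\, y_2 \in B_2(y_1) \}$, whose total probability is at most $\delta_1+\delta_2$ by a union bound. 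Outside this event, the joint density ratio factors as a product of two factors each at most $e^{\epsilon_i}$, giving an overall ratio at most $e^{\epsilon_1+\epsilon_2}$. Translating back into the event-based formulation of Definition~\ref{def:dp} yields the claimed $(\epsilon_1+\epsilon_2,\delta_1+\delta_2)$-DP guarantee for $M$.
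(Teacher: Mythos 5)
The paper does not actually prove this lemma --- it is imported wholesale from the cited reference --- so the only question is whether your argument stands on its own. The first half does: the slice decomposition $\Pr[M(X)\in S]=\sum_{y_1}\Pr[M_1(X)=y_1]\cdot\Pr[M_2(y_1,X)\in S_{y_1}]$, the pointwise application of $M_2$'s guarantee, and the extension of $M_1$'s guarantee to $[0,1]$-valued test functions via $\E[f(Y)]=\int_0^1\Pr[f(Y)>t]\,\d t$ are all correct, and you rightly observe that this route by itself only yields $(\epsilon_1+\epsilon_2,\ e^{\epsilon_2}\delta_1+\delta_2)$-DP.

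The step you use to tighten this to $\delta_1+\delta_2$ has a genuine gap. The ``equivalent good-event reformulation'' you invoke --- that an $(\epsilon,\delta)$-DP mechanism admits a bad output-event $B$ with probability at most $\delta$ outside of which the pointwise likelihood ratio is bounded by $e^{\epsilon}$ --- is \emph{not} implied by Definition~\ref{def:dp}, so it cannot be used as an equivalence. Concretely, on output space $\{0,1\}$ take $\Pr[M_1(X)=1]=0.5$ and $\Pr[M_1(X')=1]=0.4$: this pair is $(0,0.1)$-DP, yet the only output where the ratio exceeds $e^{0}=1$ is the output $1$, which has probability $0.5>0.1$ under $M_1(X)$ (and requiring the two-sided ratio bound forces $B$ to be the whole space). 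The correct tool is the nearby-distribution characterization (Lemma~3.17 of \cite{dr+14}, due to Dwork--Rothblum--Vadhan): if $M$ is $(\epsilon,\delta)$-DP, there exist distributions $\wt{Y},\wt{Y}'$ within statistical distance at most $\delta$ of $M(X),M(X')$ whose pointwise likelihood ratio is bounded by $e^{\epsilon}$ \emph{everywhere}; the ``bad mass'' lives in a mixture decomposition of the distribution, not in a measurable subset of the output space. Substituting that lemma for your bad-event partition (applied to $M_1$ and, for each fixed $y_1$, to $M_2(y_1,\cdot)$) and running the standard accounting as in Appendix~B of \cite{dr+14} recovers the union-bound contribution $\delta_1+\delta_2$ and the product bound $e^{\epsilon_1+\epsilon_2}$, so your overall plan is sound once the characterization is stated in its correct form.
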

The basic composition lemma quantifies the total privacy loss across all operations. This is essential for determining whether the overall privacy guarantee remains acceptable.

    

\section{Proof for 
\texorpdfstring{$1 - \delta$}{} Quantile}\label{sec:appendix_quantile_proof}
In this section, we provide the calculation of the probability distribution of random variable $W := \sum_{j=1}^{m} \mathbbm 1\{g[j] \neq g'[j]\}$, which plays an important part in the proof of the privacy guarantee for our algorithm (see Section~\ref{sec:appendix_privacy_guarantees}).
In Section~\ref{sec:definition_quantile}, we present the definition of random variables $W, Y, Z$ used in this section.
In Section~\ref{sec:distribution_Y}, we calculate the probability distribution of $Y$.
In Section~\ref{sec:distribution_Z}, we calculate the probability distribution of $Z$ conditioned on $Y$.
In Section~\ref{sec:distribution_W}, we calculate the probability distribution of $W$.

\subsection{Definition} \label{sec:definition_quantile}
In this section, we present the definitions of random variables which will be used in the section.
\begin{definition}[Definition of $W$]\label{def:W}
Let $W := \sum_{j=1}^{m} \mathbbm 1\{g[j] \neq g'[j]\}$, where $g \in \{0, 1\}^m$ denotes the ground truth values generated by dataset $A$, and $g' \in \{0, 1\}^m$ denotes the ground truth values generated by neighboring dataset $A'$. 
\end{definition}

\begin{definition}[Definition of $Y$]\label{def:Y}
Consider a $x \in [n]$. 

Let $y_1, y_2, \cdots , y_k$ denotes the $k$ hash values generated by the standard Bloom filter (Definition~\ref{def:bloom_filter}). 

We define $Y$ as the set of distinct values among $y_1, y_2, \cdots, y_k$, where $|Y| \in { 1, 2, \cdots, k }$.

\end{definition}

\begin{definition}[Definition of $Z$]\label{def:Z}
Consider two data $x, x' \in [n]$. 

Let $y_1, y_2, \cdots , y_k$ denotes the $k$ hash values generated by $x$, and $y_1', y_2', \cdots , y_k'$ denotes the $k$ hash values generated by $x'$. 

Follow the Definition~\ref{def:Y}, let $Y_x$ denotes the set of distinct values in $y_1, y_2, \cdots , y_k$, and $Y_{x'}$ denotes the set of distinct values in $y_1', y_2', \cdots , y_k'$.

Suppose $|Y_x| = a, |Y_{x'}| = b$, where $a, b \in \{1, 2, \cdots , k \}$

We define $Z$ is the set of distinct values in $Y_x \cup Y_{x'}$, where $|Z| \in \{1, 2, \cdots , 2k \}$

\end{definition}

\subsection{Distribution of \texorpdfstring{$Y$}{}}\label{sec:distribution_Y}
Then we proceed to calculate the probability distribution of $Y$ in this section.
\begin{lemma}[Distribution of $Y$]\label{lem:distribution_of_Y}
If the following conditions hold
\begin{itemize}
    \item Let $y_1, y_2, \cdots , y_k$ be defined in Definition~\ref{def:Y}.
    \item Let $Y$ be defined as Definition~\ref{def:Y}.
\end{itemize}

Then, we can show, for $y = 1, 2, \cdots, k$, 
\begin{align*}
    & ~ \Pr[|Y| = y] \\
    = & ~ \begin{cases}
        1 / m^{k-1},  & y = 1 \\
        \binom{m}{y} \cdot y ^k / m^k
        - \sum_{i=1}^{k-1} \binom{m - i}{y -i} \Pr[Y = i], & y = 2, \cdots , k
    \end{cases}
\end{align*}
\end{lemma}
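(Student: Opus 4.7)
\textbf{Proof proposal for Lemma~\ref{lem:distribution_of_Y}.} The plan is to establish the base case $y=1$ directly and then obtain the general case by inverting a simple ``containment'' identity. Throughout, I will use that the $k$ hash values $y_1, \ldots, y_k$ are i.i.d.\ uniform on $[m]$, so by symmetry the probability $\Pr[Y = S']$ depends only on $|S'|$.

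\textbf{Base case.} For $y=1$, the event $\{|Y|=1\}$ is the event that all $k$ hash values agree. There are $m$ possible common values, and for each fixed value the probability is $(1/m)^k$. Summing gives $\Pr[|Y|=1] = m \cdot m^{-k} = m^{-(k-1)}$, matching the first branch of the formula.

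\textbf{Inductive/identity step for $y \geq 2$.} For a fixed subset $S \subseteq [m]$ with $|S|=y$, let $E_S$ denote the event that every hash value lies in $S$; by independence, $\Pr[E_S] = (y/m)^k$. The event $E_S$ decomposes as the disjoint union over non-empty $S' \subseteq S$ of the event $\{Y = S'\}$. Summing over all $S$ of size $y$,
\begin{align*}
\binom{m}{y} \Bigl(\frac{y}{m}\Bigr)^k \;=\; \sum_{S:\,|S|=y}\,\sum_{S' \subseteq S,\,S'\neq\emptyset} \Pr[Y = S'].
\end{align*}
Swapping the order of summation and using that each $S'$ of size $i \leq y$ is contained in exactly $\binom{m-i}{y-i}$ subsets $S$ of size $y$, and then invoking symmetry ($\Pr[Y=S']$ depends only on $i=|S'|$, and there are $\binom{m}{i}$ such $S'$, so the sum over $S'$ of size $i$ equals $\Pr[|Y|=i]$), I obtain the key identity
\begin{align*}
\binom{m}{y}\frac{y^k}{m^k} \;=\; \sum_{i=1}^{y} \binom{m-i}{y-i}\,\Pr[|Y|=i].
\end{align*}
Isolating the $i=y$ term (where $\binom{m-y}{0}=1$) yields
\begin{align*}
\Pr[|Y|=y] \;=\; \binom{m}{y}\frac{y^k}{m^k} \;-\; \sum_{i=1}^{y-1} \binom{m-i}{y-i}\,\Pr[|Y|=i].
\end{align*}
To match the statement's upper index $k-1$, I note that for $i > y$ one has $y-i < 0$ and $\binom{m-i}{y-i}=0$, so extending the sum up to $i=k-1$ adds only zero terms. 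This produces exactly the expression in the lemma.

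\textbf{Anticipated obstacle.} The computation itself is essentially bookkeeping; the main subtlety is justifying the swap of summation and the ``containment'' count $\binom{m-i}{y-i}$, which is where overcounting errors could creep in. I will be careful to argue this via the double-counting identity above rather than by a direct inclusion–exclusion, since the latter would introduce signed terms that must then be reconciled with the stated recursive form. Once the identity $\binom{m}{y}(y/m)^k = \sum_{i=1}^{y}\binom{m-i}{y-i}\Pr[|Y|=i]$ is established, the rest is a one-line rearrangement.
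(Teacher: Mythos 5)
Your base case and your key double-counting identity
\begin{align*}
\binom{m}{y}\frac{y^k}{m^k} \;=\; \sum_{i=1}^{y} \binom{m-i}{y-i}\,\Pr[|Y|=i]
\end{align*}
are both correct, and this is essentially the same argument the paper gives (the paper phrases the subtraction as removing ``failure cases'' counted with multiplicity $\binom{m-i}{y-i}$); your version via decomposing $E_S$ into the disjoint events $\{Y=S'\}$ and swapping the order of summation is in fact the more careful rendering of that same idea.

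The one genuine flaw is your final step. Rearranging your identity gives a sum with upper limit $y-1$, and you claim that raising the upper limit to $k-1$ ``adds only zero terms.'' That is true for $i>y$ (where $\binom{m-i}{y-i}=0$), but the extension from $y-1$ to $k-1$ also picks up the term $i=y$ whenever $y\leq k-1$, and that term is $\binom{m-y}{0}\Pr[|Y|=y]=\Pr[|Y|=y]\neq 0$. So the formula with upper limit $k-1$, read literally, does not follow from your (correct) derivation; taken at face value it would force $2\Pr[|Y|=y]=\binom{m}{y}y^k/m^k-\sum_{i=1}^{y-1}\binom{m-i}{y-i}\Pr[|Y|=i]$, which is easily checked to be false (e.g.\ $k=3$, $y=2$ gives $\Pr[|Y|=2]=3(m-1)/m^2$ directly, not half that). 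This mismatch is really an imprecision in the lemma statement itself --- the paper's own proof writes the sum over $i\in[k-1]$ while its counting argument only justifies $i\leq y-1$ --- so the correct resolution is not to paper over the gap but to state the recursion with upper limit $y-1$ (or explicitly adopt the convention that the $i\geq y$ terms are dropped). With that one correction, your proof is complete and sound.
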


\begin{proof}

{\bf Step 1.} We consider $Y = 1$ case. 

Without any constraints, there are total $m^k$ situations. This is because each hash value can be freely chosen from $m$ positions, and there are $k$ hash values. Therefore, there are total $m^k$ situations. 

Then, with constraint $Y = 1$, $k$ hash values must be assigned to the same position. The position can be chosen from a total of $m$ positions. Therefore, in this case, there are $m$ situations. 

Combining the above two analysis, we have
\begin{align*}
    \Pr[Y = 1] = & ~ \frac{m}{m^k} \notag \\
    = & ~ \frac{1}{m^{k - 1}}.
\end{align*}

{\bf Step 2.} We consider $Y = 2, \cdots , k$ cases.

Similarly, without any constraints, there are total $m^k$ situations. 

Since we need $Y = y$, we must choose $y$ from different positions in the total $m$ positions. Therefore, we have $\binom{m}{y}$ term.

Note that in each position, we need at least one hash value. We first compute the number of freely assigning $k$ hash values to the $y$ positions. Then we remove the failure cases.  

As there are $y$ positions and $k$ hash values, we have the $y^k$ term for freely assigning $k$ hash values to $y$ positions.

For the failure case, we have $\sum_{i=1}^{k-1} \Pr[Y = i] \cdot \binom{m - i}{y -i}$. The $\binom{m - i}{y -i}$ term is due to repeated counting for each $i \in [k-1]$, where we first fix $i$ positions and then randomly pick the other $y-i$ different positions in the total $m-i$ positions. 

Thus, in all, we have the following formula,
\begin{align*}
    \Pr[Y = y] = \frac{\binom{m}{y} \cdot y ^k}{m^k} - \sum_{i=1}^{k-1} \Pr[Y = i] \cdot \binom{m - i}{y -i}.
\end{align*}

\end{proof}

\subsection{Distribution of \texorpdfstring{$Z$}{} conditioned on \texorpdfstring{$Y$}{}}\label{sec:distribution_Z}
In this section, we calculate the probability distribution of $Z$ condition on $Y$.

\begin{lemma}[Probability of $Z$ conditioned on $Y_x$ and $ Y_{x'}$]\label{lem:distribution_of_Z}
If the following conditions hold
\begin{itemize}
    \item Let $Y_x, Y_{x'}, Z$ be defined as Definition~\ref{def:Z}.
    \item Let $A_n^m$ denotes $n! / (n-m)!$.
    \item Let $t := z - \max(a, b)$. 
\end{itemize}

Then, we can show, for $z = \max(a, b), \cdots, (a + b)$, 
\begin{align*}
    \Pr[|Z| = z | |Y_x| = a, |Y_{x'}| = b] = \frac{A_m^a \cdot \binom{b}{t} \cdot A_{m - a}^t \cdot A_{a}^{b-t}}{A_m^a \cdot A_m^b}.
\end{align*}
\end{lemma}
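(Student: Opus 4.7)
The plan is to prove this identity by a direct combinatorial counting argument on ordered tuples. First I would argue that, conditional on $|Y_x| = a$, the set $Y_x$ is uniformly distributed over the $a$-subsets of $[m]$, which follows from the exchangeability of the $k$ independent uniform hash values generating $Y_x$. Equivalently, viewing $Y_x$ as an ordered tuple of $a$ distinct elements gives uniformity over the $A_m^a$ such tuples. By symmetry the same holds for $Y_{x'}$ over $A_m^b$ tuples, and independence of the two groups of hash values yields a uniform joint distribution over $A_m^a \cdot A_m^b$ ordered pairs. This identifies the denominator.

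Next I would count the favorable outcomes where $|Y_x \cup Y_{x'}| = z$. Without loss of generality assume $a = \max(a,b)$, so that $t := z - \max(a,b) = z - a$ represents the number of elements of $Y_{x'}$ lying outside $Y_x$ (every element of $Y_x$ must lie in the union, and the union has size $z$). I would then enumerate favorable configurations in four independent stages: pick an ordered tuple $Y_x$ ($A_m^a$ ways), choose which $t$ of the $b$ positions in $Y_{x'}$ host the ``new'' elements ($\binom{b}{t}$ ways), fill those $t$ positions with an ordered selection of distinct elements from $[m] \setminus Y_x$ ($A_{m-a}^{t}$ ways), and fill the remaining $b - t$ positions with an ordered selection of distinct elements from $Y_x$ ($A_a^{b-t}$ ways). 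Multiplying these counts and dividing by $A_m^a \cdot A_m^b$ produces exactly the claimed expression.

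The main obstacle is the combinatorial bookkeeping: I must verify that the four counting stages partition the favorable ordered pairs without double counting, and that the permutation factors $A_{m-a}^{t}$ and $A_a^{b-t}$ remain nonzero exactly on the admissible range $z \in [\max(a,b),\, a+b]$, equivalently $t \in [0, \min(a,b)]$. Once the WLOG assumption $a \geq b$ is invoked to justify the $\max(a,b)$ in the definition of $t$, and the disjointness of the ``new'' and ``old'' position subsets is checked, the identity follows immediately from multiplying the counts.
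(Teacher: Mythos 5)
Your proposal is correct and follows essentially the same counting argument as the paper: fix $Y_x$ (accounting for the $A_m^a$ factor that cancels), then count ordered $b$-tuples for $Y_{x'}$ by choosing which $t = z - a$ coordinates carry elements outside $Y_x$ via $\binom{b}{t}\cdot A_{m-a}^{t}\cdot A_a^{b-t}$, over the total $A_m^a\cdot A_m^b$. Your added justification that, conditional on $|Y_x|=a$, the set $Y_x$ is uniform over $a$-subsets by exchangeability of the i.i.d.\ uniform hash values is a step the paper leaves implicit, and is a worthwhile clarification.
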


\begin{proof}

Since the minimum value of $Z$ is $\max(a, b)$, without loss of generality, we assume $ a \geq b$. Then we have $a \leq z \leq (a + b)$.

Recall we have $t = z - \max(a, b) = z - a, t \in \{0, 1, \cdots , b\}$. Then we have
\begin{align*}
    & ~ \Pr[|Z| = a + t | |Y_x| = a, |Y_{x'}| = b] \\
    = & ~ \frac{A_m^a \cdot \binom{b}{t} \cdot A_{m - a}^t \cdot A_{a}^{b-t}}{A_m^a \cdot A_m^b}.
\end{align*}

We explain why we have the above equation in the following steps.

{\bf Step 1.} We consider the denominator. 

Without any constraints, since $|Y_x| = a$, we need to choose $a$ from different positions in the total $m$ positions. Therefore, we have the $A_m^a$ term in the denominator. Similarly, since $|Y_{x'}| = b$, we have the $A_m^b$ term in the denominator. 

{\bf Step 2.} We consider the numerator. 

Firstly, since $|Y_x| = a$, we need to choose $a$ different positions in total $m$ positions. Therefore, we have the $A_m^a$ term in the numerator. 

Since $Z$ is defined as Definition~\ref{def:Z}, we can have the following
\begin{align*}
    |Y_x \cap Y_{x'}| = & ~ a + b - z \notag \\
    |Y_{x'}| - |Y_x \cap Y_{x'}| = & ~ z - a \notag \\
    = & ~ t
\end{align*}

Then, we need to choose $t$ values from $Y_{x'}$ to construct $|Y_{x'}| - |Y_x \cap Y_{x'}|$ part. Therefore, we have the $\binom{b}{t}$ term in the numerator. 

We also need to choose $t$ different positions in the rest $m - a$ positions for  $|Y_{x'}| - |Y_x \cap Y_{x'}|$ part. Hence, we have the $A_{m - a} ^ t$ term in the numerator. 

Lastly, let's consider the $b - t$ part. For this part, we need to choose $b - t$ different positions from $a$ positions. Therefore, we have the $A_a^{b - t}$ term in the numerator. 

Combining all analyses together, finally, we have 
\begin{align*}
   \Pr[|Z| = z | |Y_x| = a, |Y_{x'}| = b] = \frac{A_m^a \cdot \binom{b}{t} \cdot A_{m - a}^t \cdot A_{a}^{b-t}}{A_m^a \cdot A_m^b}.
\end{align*}

\end{proof}

\subsection{Distribution of \texorpdfstring{$W$}{}}\label{sec:distribution_W}

\begin{figure}[!ht]
\centering
\includegraphics[width=0.45\textwidth]{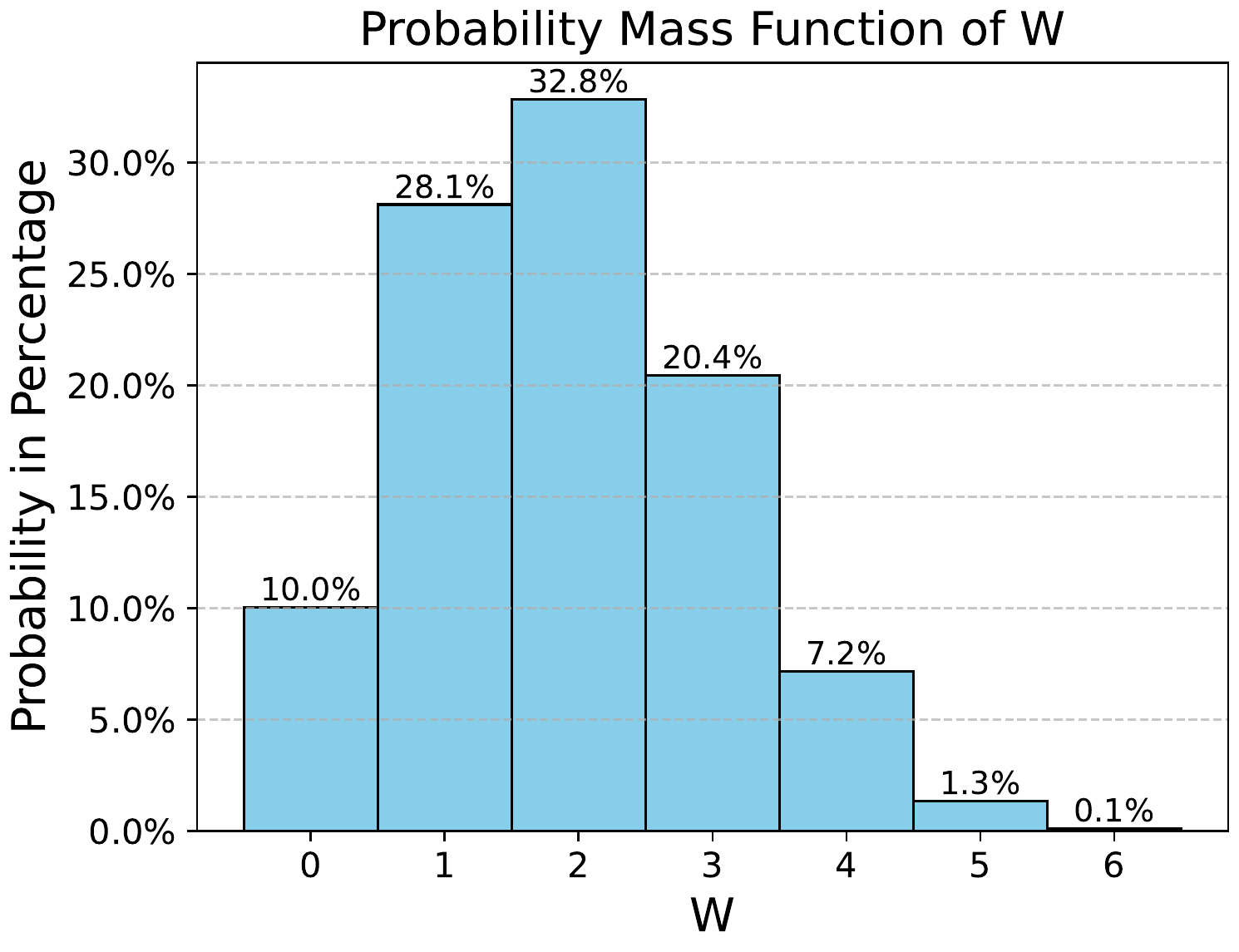}
\caption{
Let $W := |S|$ denote the number of bits in the Bloom filter changed by substituting an element in the inserted set $A$ (Definition~\ref{def:pre_neighbor_dataset}). We achieve $\epsilon_0$-DP for each single bit and $(\epsilon, \delta)$-DP for the entire Bloom filter via the random response (Definition~\ref{def:random_response}), where $\epsilon_0 = \epsilon / N$. 
The $N$ is $1 - \delta$ quantile of the random variable $W$. 
It can be inferred from this visualization that the values of random variable $W$ have good concentration properties, mostly concentrated around its mean. 
}
\label{fig:w_distribution}
\end{figure}

Finally, we present the calculation of the probability distribution of $W$ in this section.
\begin{lemma}[Distribution of $W$]\label{lem:distribution_of_W}
If the following conditions hold
\begin{itemize}
    \item Let $Y_x, Y_{x'}, Z$ be defined as Definition~\ref{def:Z}.
    \item Let $W$ be defined as Definition~\ref{def:W}.
    \item Let $A_n^m$ denotes $\frac{n!}{(n - m)!}$. 
    \item Let $p_0 := (1 - \frac{1}{m})^{(|A| - 1)k}$ denotes the proportion of bits which are still $0$ in the bit-array.
    \item Let $n_1 := |Y_x \cap Y_{x'}|= a + b - z$ denotes the number of overlap elements in $Y_x$ and $Y_{x'}$. 
    \item Let $n_2 := |Y_x \cup Y_{x'}| - |Y_x \cap Y_{x'}| =  z  -(a + b - z) = 2z -a -b$ denotes the number of exclusive or elements in $Y_x$ and $Y_{x'}$.
\end{itemize}

Then, we can show, for $w=0, \cdots 2k$,
\begin{align*}
    & ~ \Pr[W = w] \notag \\
    = & ~ \sum_{a = 1}^k \sum_{b = 1}^k \sum_{z = 1}^{a+b} \Pr[W = w | |Z| = z, |Y_x| = a, |Y_{x'}| = b] \notag \\
    & ~ \cdot \Pr[|Z| = z | |Y_x| = a, |Y_{x'}| = b] \\
    & ~ \cdot \Pr[|Y_x| = a] \cdot  \Pr[|Y_{x'}| = b].
\end{align*}

where

\begin{align*}
    & ~ \Pr[W = w | |Z| = z, |Y_x| = a, |Y_{x'}| = b] \\
    = & ~
    \begin{cases}
        0,  &  n_2 < w \\
        \binom{n_2}{w} \cdot p_0^w \cdot (1 - p_0)^{n_2 - w}, & n_2 \geq w
    \end{cases}
\end{align*}
\end{lemma}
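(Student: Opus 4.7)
\textbf{Proof Proposal for Lemma~\ref{lem:distribution_of_W}.}

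The plan is to decompose the event $\{W = w\}$ by conditioning on the three structural random variables $|Y_x|$, $|Y_{x'}|$, and $|Z|$, and then invoke the law of total probability. The outer summation over $a, b, z$ follows once I establish the factorization $\Pr[|Y_x|=a,\, |Y_{x'}|=b] = \Pr[|Y_x|=a]\cdot\Pr[|Y_{x'}|=b]$, which holds because the hash values of $x$ and $x'$ are generated independently under Definition~\ref{def:bloom_filter}. The two marginals are then supplied by Lemma~\ref{lem:distribution_of_Y}, and the conditional $\Pr[|Z|=z \mid |Y_x|=a, |Y_{x'}|=b]$ by Lemma~\ref{lem:distribution_of_Z}. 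So the combinatorial content reduces to identifying the innermost factor $\Pr[W = w \mid |Z|=z, |Y_x|=a, |Y_{x'}|=b]$.

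To identify that innermost factor, I would first observe that, since $A$ and $A'$ differ in exactly one element (call them $x \in A\setminus A'$ and $x' \in A'\setminus A$), a position $j \in [m]$ satisfies $g[j]\neq g'[j]$ if and only if (i) $j$ lies in the symmetric difference $(Y_x \cup Y_{x'}) \setminus (Y_x \cap Y_{x'})$, and (ii) none of the $|A|-1$ common elements hashes to $j$. Indeed, if $j \in Y_x \cap Y_{x'}$ then both $g[j]$ and $g'[j]$ equal $1$, and if $j \notin Y_x \cup Y_{x'}$ then $g[j]$ and $g'[j]$ are determined solely by the common elements and are therefore equal. The size of the symmetric difference is exactly $n_2 = 2z - a - b$, which immediately gives the boundary case $\Pr[W=w \mid \cdot] = 0$ whenever $n_2 < w$.

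For the regime $n_2 \ge w$, I would argue that each of the $n_2$ positions in the symmetric difference is independently ``uncovered'' by the $|A|-1$ common elements with marginal probability $p_0 = (1 - 1/m)^{(|A|-1)k}$, under the standing independent-uniform-hash hypothesis of Definition~\ref{def:bloom_filter}. The number of uncovered positions in the symmetric difference therefore follows $\mathrm{Binomial}(n_2, p_0)$, delivering the expression $\binom{n_2}{w} p_0^{w}(1-p_0)^{n_2 - w}$. Combining this with the three outer factors via the law of total probability yields the claimed identity.

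The main obstacle is the Binomial step: strictly speaking the events ``position $j$ is not hit by any common element,'' indexed by $j$ in the symmetric difference, are only approximately independent, because the hashes of the common elements are conditioned on not landing in a specific collection of positions. I would handle this in the same spirit as the existing utility analysis in the paper (see the derivation leading up to Lemma~\ref{lem:bloom_true_accuracy:informal}, which uses the same $(1-1/m)^{|A|k}$-style marginal), namely by invoking the standard Bloom-filter approximation that treats distinct positions as independent Bernoulli trials with marginal probability $p_0$. Once this approximation is adopted, the remaining bookkeeping is a direct application of the law of total probability and the two already-proved distributional lemmas.
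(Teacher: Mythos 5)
Your proposal is correct and follows essentially the same route as the paper: the law of total probability over $(|Y_x|, |Y_{x'}|, |Z|)$ with the independence factorization of the two marginals, Lemmas~\ref{lem:distribution_of_Y} and \ref{lem:distribution_of_Z} for the outer factors, and the observation that only symmetric-difference positions left uncovered by the $|A|-1$ common elements contribute to $W$, yielding the $\binom{n_2}{w} p_0^w (1-p_0)^{n_2-w}$ form. You are in fact slightly more careful than the paper, which asserts this binomial expression without acknowledging that the per-position ``uncovered'' events are only approximately independent (the exact probability that $w$ specified positions are all empty is $(1-w/m)^{(|A|-1)k}$, not $p_0^w$), an approximation the paper implicitly adopts throughout its Bloom-filter analysis.
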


\begin{proof}

By basic probability rules, we have the following equation
\begin{align*}
    & ~\Pr[W = w] \notag \\
    =& ~ \sum_{a = 1}^k \sum_{b = 1}^k \sum_{z = 1}^{a+b} \Pr[W = w | |Z| = z, |Y_x| = a, |Y_{x'}| = b] \notag \\
    & ~ \cdot \Pr[|Z| = z | |Y_x| = a, |Y_{x'}| = b] \\
    & ~ \cdot \Pr[|Y_x| = a, |Y_{x'}| = b] \notag \\
    =& ~ \sum_{a = 1}^k \sum_{b = 1}^k \sum_{z = 1}^{a+b} \Pr[W = w | |Z| = z, |Y_x| = a, |Y_{x'}| = b] \notag \\
    & ~ \cdot \Pr[|Z| = z | |Y_x| = a, |Y_{x'}| = b] \\
    & ~ \cdot \Pr[|Y_x| = a] \cdot \Pr[|Y_{x'}| = b].
\end{align*}
where the first step follows from basic probability rules, the second step follows from $Y_x$, and $Y_{x'}$ are independent. 

We can get the probability of $\Pr[|Y_x| = a]$ and $\Pr[|Y_{x'}| = b$ from Lemma~\ref{lem:distribution_of_Y}. 

We can get the probability of $\Pr[|Z| = z | |Y_x| = a, |Y_{x'}| = b]$ from Lemma~\ref{lem:distribution_of_Z}. 

Now, let's consider the $\Pr[W = w | |Z| = z, |Y_x| = a, |Y_{x'}| = b]$ term. 

Note that only elements in the exclusive-or set may contribute to the final $W$. Therefore, we have $w \leq n_2$. Namely, when $n_2 < w$, we have $\Pr[W = w | |Z| = z, |Y_x| = a, |Y_{x'}| = b] = 0$. 

Now, let's calculate $\Pr[W = w | |Z| = z, |Y_x| = a, |Y_{x'}| = b]$ under $n_2 \geq w$ condition. 

Recall $x$ denotes the element deleted from $A$, and $x'$ denotes the element added to $A$ for constructing the neighbor dataset $A'$. 

Let $A_{fix} := A - x$ denote the fixed set of elements during the modifications. We have $|A_{fix}| = |A| - 1$. 

Consider the following steps:
\begin{itemize}
    \item We construct a new Bloom filter.
    \item We insert all elements in $A_{fix}$ in the Bloom filter.
    \item We define $Z_{zero}$ as the set of positions of bits which are still $0$ after the insertion of $A_{fix}$.
\end{itemize}

We define $Z_{xor}$ as the exclusive-or set of $Y_x$ and $Y_{x'}$. We have
\begin{align*}
    Z_{xor} = & ~ (Y_x \cup Y_{x'}) - (Y_x \cap Y_{x'}), \notag \\
    |Z_{xor}| = & ~ |Y_x \cup Y_{x'}| - |Y_x \cap Y_{x'}| \notag \\
    = & ~ z - (a +b - z) \notag \\
    = & ~ 2z - a - b \notag \\
    = & ~ n_2.
\end{align*}

Note that only positions in $Z_{xor} \cap Z_{zero}$ will contribute to $W$. Namely, we need $|Z_{xor} \cap Z_{zero}| = w$. 

We achieve the above condition by selecting $w$ elements in $Z_{xor}$ and let them satisfy the condition of $Z_{zero}$. 

Therefore, we have 
\begin{align*}
    & ~ \Pr[|Z_{xor} \cap Z_{zero}| = w] \\
    = & ~ \binom{n_2}{w} \cdot (1 - \frac{1}{m})^{(|A| - 1)kw} \cdot (1 - (1 - \frac{1}{m})^{(|A| - 1)k})^{n_2 - w}.
\end{align*}

Combining the above analysis, we have
\begin{align*}
    & ~ \Pr[W = w | |Z| = z, |Y_x| = a, |Y_{x'}| = b] \\
    = & ~ 
    \begin{cases}
        0,  &  n_2 < w \\
        \binom{n_2}{w} \cdot p_0^w \cdot (1 - p_0)^{n_2 - w}, & n_2 \geq w
    \end{cases}.
\end{align*}

\end{proof}

\section{Privacy guarantees for one coordinate}\label{sec:appendix_privacy_guarantees}
In this section, we provide proof of the privacy guarantees of the DPBloomfilter.

In Section~\ref{sec:single_bit_private}, we demonstrate the privacy guarantees for single bit of array in Bloom filter.

Then in Section~\ref{sec:query_privacy}, we provide the proof of privacy guarantees for our entire algorithm.

\subsection{Single bit is private} \label{sec:single_bit_private}
We first consider the privacy guarantees of single bit of array in Bloom filter.
\begin{lemma} [Single bit is private
] \label{lem:eps0_DP:formal}
If the following conditions hold:
\begin{itemize}
    \item Let $\epsilon_0 \geq 0$. 
    \item Let $\wt{g}[j] \in \{0,1\}$ be the $i$-th element of array output by DPBloomfilter

\end{itemize}

Then, we can show that, for all
$j \in [m]$, $\wt{g}[j]$ is $\epsilon_0$-DP. 
\end{lemma}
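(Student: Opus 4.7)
The plan is to prove Lemma~\ref{lem:eps0_DP:formal} by a direct case analysis on the value of the underlying standard Bloom filter bit across a pair of neighboring datasets, reducing the claim to the basic algebraic guarantee of the random response mechanism (Definition~\ref{def:random_response}).

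First I would fix an arbitrary coordinate $j \in [m]$ and an arbitrary pair of neighboring datasets $A, A' \subset [n]$ (in the sense of Definition~\ref{def:pre_neighbor_dataset}). Let $g[j]$ and $g'[j]$ denote the corresponding bits of the non-private Bloom filter output on $A$ and $A'$, respectively. Since both $g[j]$ and $g'[j]$ are passed independently through the per-bit random response mechanism with privacy parameter $\epsilon_0$, it suffices to show that for every $v \in \{0,1\}$,
\begin{align*}
\Pr[\wt{g}[j] = v] \leq e^{\epsilon_0} \cdot \Pr[\wt{g}'[j] = v].
\end{align*}

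Next I would split into the two cases already flagged in Section~\ref{sec:tec_privacy_sb}. In \textbf{Case 1}, where $g[j] = g'[j]$, both $\wt{g}[j]$ and $\wt{g}'[j]$ have identical distributions by Definition~\ref{def:random_response}, so the probability ratio is exactly $1 \leq e^{\epsilon_0}$. In \textbf{Case 2}, where $g[j] \neq g'[j]$, one of $\Pr[\wt{g}[j] = v]$ and $\Pr[\wt{g}'[j] = v]$ equals $e^{\epsilon_0}/(e^{\epsilon_0}+1)$ while the other equals $1/(e^{\epsilon_0}+1)$, so the ratio is exactly $e^{\epsilon_0}$ or $e^{-\epsilon_0}$; in particular it is bounded above by $e^{\epsilon_0}$. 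Combining both cases and noting that $v \in \{0,1\}$ and the choice of neighbors were arbitrary yields the desired $\epsilon_0$-DP guarantee for $\wt{g}[j]$ with $\delta = 0$.

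The argument is essentially a one-line consequence of the definition of random response; there is no substantive obstacle. The only minor subtlety worth stating explicitly is that we must handle the direction of the ratio symmetrically (bounding both $\Pr[\wt{g}[j]=v]/\Pr[\wt{g}'[j]=v]$ and its reciprocal by $e^{\epsilon_0}$), since the DP definition (Definition~\ref{def:dp}) is asymmetric between $M(u)$ and $M(u')$ and the lemma is stated for arbitrary orderings of the neighboring pair. This symmetric two-sided bound is exactly what will be invoked downstream in Section~\ref{sec:tec_privacy_dp} when summing per-coordinate losses over the set $S$ via the basic composition lemma (Lemma~\ref{lem:pre_com_lem}) to derive the full $(\epsilon,\delta)$-DP guarantee.
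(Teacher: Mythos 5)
Your proposal is correct and follows essentially the same route as the paper's own proof: a two-case analysis on whether $g[j]=g'[j]$, with the ratio equal to $1$ in the first case and lying in $[e^{-\epsilon_0},e^{\epsilon_0}]$ in the second, directly from Definition~\ref{def:random_response}. Your explicit remark about bounding the ratio symmetrically in both directions is a reasonable clarification but does not change the argument.
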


\begin{proof}

$\forall j \in [m]$, $g[j]$ is the ground truth value generated by dataset $A \subset [n]$. (An alternative view of $g$ is $g:[m] \rightarrow \{0,1\}$.) Suppose $g[j] = u$, $u \in \{0, 1\}$. For any neighboring dataset $A' \subset [n]$, we denote the ground truth value generated by it as $g'[j]$. Similarly, we can define the $\wt{g}'[j]$. 

We consider the following two cases to prove $\wt{g}[j]$ is $\epsilon_0$-DP, for all $j \in [m]$.

{\bf Case 1}. Suppose $g'[j] = u$. We know
\begin{align*}
    \Pr [ \wt{g}[j] = u ] = & ~ \frac{e^{\epsilon_0}}{ e^{\epsilon_0} + 1 }, \\
    \Pr[ \wt{g}'[j] = u ] = & ~ \frac{e^{\epsilon_0}}{ e^{\epsilon_0} + 1 }.
\end{align*}
Combining the above two equations, then we obtain
\begin{align*}
\frac{ \Pr [ \wt{g}[j] = u ] }{ \Pr[ \wt{g}'[j] = u ] } = 1.
\end{align*}

Similarly, we know 
\begin{align*}
    \Pr [ \wt{g}[j] = 1-u ] = & ~ \frac{ 1 }{ e^{\epsilon_0} + 1 }, \\
    \Pr[ \wt{g}'[j] = 1-u ] = & ~ \frac{ 1 }{ e^{\epsilon_0} + 1 }.
\end{align*}
Combining the above two equations, then we obtain
\begin{align*}
\frac{ \Pr [ \wt{g}[j] = 1- u ] }{ \Pr[ \wt{g}'[j] = 1-u ] } = 1.
\end{align*}
Thus, we know for all $v\in \{0,1\}$,
\begin{align*}
\frac{ \Pr [ \wt{g}[j] = v ] }{ \Pr[ \wt{g}'[j] = v ] } = 1.
\end{align*}

{\bf Case 2}. Suppose $g'[j] \neq u$.

We know
\begin{align*}
    \Pr [ \wt{g}[j] = u ] = & ~ \frac{e^{\epsilon_0}}{ e^{\epsilon_0} + 1 }, \\
    \Pr[ \wt{g}'[j] = u ] = & ~ \frac{ 1 }{ e^{\epsilon_0} + 1 }.
\end{align*}
Combining the above two equations, then we obtain
\begin{align*}
\frac{ \Pr [ \wt{g}[j] = u ] }{ \Pr[ \wt{g}'[j] = u ] } = e^{\epsilon_0}.
\end{align*}

Similarly, we know 
\begin{align*}
    \Pr [ \wt{g}[j] = 1-u ] = & ~ \frac{ 1 }{ e^{\epsilon_0} + 1 }, \\
    \Pr[ \wt{g}'[j] = 1-u ] = & ~ \frac{ e^{\epsilon_0} }{ e^{\epsilon_0} + 1 }.
\end{align*}
Combining the above two equations, then we obtain
\begin{align*}
\frac{ \Pr [ \wt{g}[j] = 1- u ] }{ \Pr[ \wt{g}'[j] = 1-u ] } = e^{-\epsilon_0}.
\end{align*}

For $v \in \{0, 1\}$, we have 
\begin{align*}
e^{- \epsilon_0} \leq \frac{ \Pr [ \wt{g}[j] = v ] }{ \Pr [ \wt{g}'[j] = v ] } \leq e^{\epsilon_0}.
\end{align*}

Therefore, $\forall j \in [m]$, $\wt{g}[j]$ is $\epsilon_0$-DP. 
\end{proof}

\subsection{Privacy guarantees for DPBloomfilter}\label{sec:query_privacy}
Then, we can prove that our entire algorithm is differentially private.
\begin{theorem}[Privacy for Query, formal version of Lemma~\ref{thm:query_privacy:informal}]\label{thm:query_privacy:formal}
If the following conditions hold
\begin{itemize}
    \item Let $N = F_W^{-1}(1 - \delta)$ denote the $1 - \delta$ quantile of the random variable $W$ (see Definition~\ref{def:W}).
    \item Let  $\epsilon_0 = \epsilon / N$.
\end{itemize}

Then, we can show,
the output of \textsc{Query} procedure of Algorithm~\ref{alg:init} achieves $(\epsilon, \delta)$-DP. 
\end{theorem}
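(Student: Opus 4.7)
The plan is to reduce the global $(\epsilon,\delta)$-DP claim for the entire Bloom filter output to the per-bit $\epsilon_0$-DP guarantee already established in Lemma~\ref{lem:eps0_DP:formal}, paying attention to the fact that only bits where $g$ and $g'$ disagree contribute to the privacy loss. Concretely, fix neighboring datasets $A, A'$ and let $g, g' \in \{0,1\}^m$ be the Bloom filter bit arrays they induce. Define $S := \{j \in [m] : g[j] \neq g'[j]\}$ and $\bar S := [m] \setminus S$, and recall $W = |S|$ from Definition~\ref{def:W}. Since the random response is applied independently to each bit, $\wt g$ has product-form distribution, and I will bound $\Pr[\wt g = Z]/\Pr[\wt g' = Z]$ coordinate-wise: for $j \in \bar S$ the ratio is exactly $1$, while for $j \in S$ Lemma~\ref{lem:eps0_DP:formal} gives a ratio in $[e^{-\epsilon_0}, e^{\epsilon_0}]$.

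Next, I would multiply these per-coordinate ratios (equivalently, apply the basic composition Lemma~\ref{lem:pre_com_lem} across the $m$ independent bit-level mechanisms) to obtain
\begin{align*}
\Bigl|\ln \frac{\Pr[\wt g = Z]}{\Pr[\wt g' = Z]}\Bigr| \leq W \cdot \epsilon_0
\end{align*}
for every $Z \in \{0,1\}^m$, conditional on any realization of the hash functions (which determine $W$). Plugging in $\epsilon_0 = \epsilon/N$ yields the bound $W \cdot \epsilon/N$, which is at most $\epsilon$ exactly when $W \leq N$.

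The main obstacle, and the key technical point, is that $W$ is itself a random variable (depending on the hashes, as analyzed in Lemma~\ref{lem:distribution_of_W}), so I cannot simply conclude $\epsilon$-DP outright. I will handle this by the standard quantile trick used throughout the tech overview: because $N = F_W^{-1}(1-\delta)$, we have $\Pr[W \leq N] \geq 1 - \delta$. Letting $\mathcal{E}$ be the good event $\{W \leq N\}$, I will split
\begin{align*}
\Pr[\wt g \in Z] = \Pr[\wt g \in Z \wedge \mathcal{E}] + \Pr[\wt g \in Z \wedge \mathcal{E}^c],
\end{align*}
bound the first term by $e^{\epsilon} \Pr[\wt g' \in Z]$ using the conditional composition bound above, and absorb the second term into $\delta$ via $\Pr[\mathcal{E}^c] \leq \delta$. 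This yields $\Pr[\wt g \in Z] \leq e^\epsilon \Pr[\wt g' \in Z] + \delta$ for every measurable $Z$, which is the definition of $(\epsilon,\delta)$-DP.

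Finally, I will argue that the \textsc{Query} procedure inherits this guarantee by post-processing: \textsc{Query}$(y)$ reads only a fixed set of coordinates of $\wt g$ (those indexed by $h_1(y), \ldots, h_k(y)$) and returns a deterministic function of them, and DP is preserved under post-processing. Thus the output distribution of \textsc{Query} on input $A$ vs.\ $A'$ also satisfies $(\epsilon,\delta)$-DP, completing the proof.
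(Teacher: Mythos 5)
Your proposal is correct and follows essentially the same route as the paper's proof: the same decomposition into $S$ and $\bar S$, the same per-bit ratio bounds from Lemma~\ref{lem:eps0_DP:formal}, the same product/composition bound $W\epsilon_0$, and the same quantile argument on $N = F_W^{-1}(1-\delta)$. In fact your write-up is slightly more complete than the paper's, which ends with ``with probability $1-\delta$ the loss is at most $\epsilon$, this proves $(\epsilon,\delta)$-DP'' without spelling out the event-splitting step $\Pr[\wt g \in Z] \leq \Pr[\wt g \in Z \wedge \mathcal{E}] + \Pr[\mathcal{E}^c]$ or the post-processing step for \textsc{Query}, both of which you supply explicitly.
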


\begin{proof}
Let $A$ and $A'$ are neighboring datasets. Let $g \in \{0, 1\}^m$ is the ground truth value generated by dataset $A$, and $g' \in \{0, 1\}^m$ is the ground truth value generated by dataset $A'$.

We define
\begin{align*}
    S := \{j \in [m] ~:~ g[j] \neq g'[j]\}.
\end{align*}
We further define
\begin{align*}
    \ov{S} := [m] \backslash S.
\end{align*}

We consider two cases, {\bf Case 1} is $j \in \ov{S}$ and {\bf Case 2} is $j \in S$.

{\bf Case 1}. $j \in \ov{S}$. 

We can show that
\begin{align*}
\frac{ \Pr [ \wt{g}[j] = v ] }{\Pr[ \wt{g'}[j] = v ] } = 1.
\end{align*}
holds for $\forall v \in \{0, 1\}$.

{\bf Case 2.} $j \in S$.

We can show that
\begin{align}\label{eqn:query_privacy_single}
    e^{-\epsilon_0}\leq \frac{ \Pr[ \wt{g}[j] = v ] }{ \Pr[ \wt{g'}[j] = v] } \leq e^{\epsilon_0}.
\end{align}
holds for $\forall v \in \{0, 1\}$. 

Thus, for any $Z\in \{0,1\}^m$, the absolute privacy loss can be bounded by
\begin{align}\label{eqn:query_privacy_prod}
     |\ln \frac{ \Pr[ \wt{g} = Z ] }{ \Pr[ \wt{g'} = Z ] } | 
     = & ~  |\ln \prod_{j\in S} \frac{ \Pr[ \wt{g}[j] = v ] }{ \Pr[ \wt{g'}[j] = v ] }  | \notag \\
     \leq & ~ |S| \epsilon_0 \notag \\
     = & ~  |S|\frac{\epsilon}{N}.  
\end{align}
where the first step follows from each entry of $g$ is independent, the second step follows from Eq.~\eqref{eqn:query_privacy_single}, and the last step follows from choice of $\epsilon_0$.

By the definition of $N$, we know that with probability at least $1-\delta$, $|S|\leq F^{-1}(1-\delta)=N$. Hence, Eq.~\eqref{eqn:query_privacy_prod} is upper bounded by $\epsilon$ with probability $1-\delta$. 

This proves the $(\epsilon,\delta)$-DP.
\end{proof}

\section{Utility analysis}\label{sec:appendix_utility}
In this section, we establish the utility guarantees for our algorithm. Initially, we calculate the accuracy for the query of the standard Bloom filter in Section~\ref{sec:acc_bloom}. We then assess the utility loss caused by introducing the random response technique by comparing the output of the DPBloomfilter with the output of the standard Bloom filter in Section~\ref{sec:acc_dpbloom_bloom}. Ultimately, we present the assessment of our algorithm's utility in Section~\ref{sec:acc_dpbloom_true}.

We begin by defining the notation we will use in this section.
\begin{definition}\label{def:three_z}
    Let $z \in \{0,1\}$ denote the true answer for whether $x \in A$. Let $\wh{z} \in \{0,1\}$ denote the answer outputs by \textsc{Bloom filter}. Let $\wt{z} \in \{0,1\}$ denote the answer output by \textsc{DPBloomFilter} (Algorithm~\ref{alg:init}).
\end{definition}

\subsection{Accuracy for query of Standard Bloom Filter}\label{sec:acc_bloom}

We first present the accuracy of the query of the standard bloom filter, as follows.

\begin{lemma}[Accuracy for query of Standard Bloom Filter
]\label{lem:bloom_true_accuracy:formal}
If the following conditions hold
\begin{itemize}
    \item Assume that a hash function selects each array position with equal probability. 
    \item Let $\wh{z}$ be defined as Definition~\ref{def:three_z}.
    \item Let $z$ be defined as Definition~\ref{def:three_z}.
    \item Let $\alpha := \Pr[z=0]$
\end{itemize}
Then, we can show
\begin{align*}
    \Pr [ \wh{z} = z ] \geq 1 - (1 - e^{-2|A| k / m})^k \cdot \alpha.
\end{align*}
Further if $m = \Omega(|A| k)$ and $k = \Theta(\log(1/\delta_{err}))$, we have
\begin{align*}
     \Pr [ \wh{z} = z ] \geq 1 - \delta_{err} \cdot \alpha.
\end{align*}
\end{lemma}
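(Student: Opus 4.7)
The plan is to use the decomposition $\Pr[\wh{z} = z] = 1 - \Pr[\wh{z} \neq z]$, exploiting the no-false-negative property of the standard Bloom filter. First I would observe that whenever $z = 1$ (the query element truly lies in $A$), all $k$ positions $h_i(y)$ were set to $1$ during \textsc{Init}, so $\wh{z} = 1$ deterministically; hence $\Pr[\wh{z} \neq z \mid z = 1] = 0$. This gives
\begin{align*}
\Pr[\wh{z} \neq z] = \Pr[\wh{z} = 1 \mid z = 0] \cdot \Pr[z = 0] = \Pr[\wh{z} = 1 \mid z = 0] \cdot \alpha,
\end{align*}
so it suffices to upper bound the false positive probability.

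Next, under the uniform-and-independent hash assumption in Condition~\ref{con:utility_condition}, the probability that a particular bit of $g$ is still $0$ after inserting all $|A|$ elements (each hashed $k$ times) is exactly $(1 - 1/m)^{|A| k}$. I would invoke the elementary inequality $(1 - 1/m) \ge e^{-2/m}$ (valid for $m \ge 2$, since $\ln(1-x) \ge -2x$ for $x \in [0, 1/2]$) to get $(1 - 1/m)^{|A|k} \ge e^{-2 |A| k / m}$. Consequently, each bit is $1$ with probability at most $1 - e^{-2 |A| k / m}$, and under the standard Bloom-filter independence approximation for the $k$ bits probed by the query $y \notin A$, we conclude $\Pr[\wh{z} = 1 \mid z = 0] \le (1 - e^{-2|A|k/m})^k$. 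Combining this with the decomposition yields the first claimed bound.

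For the ``further'' claim, I would substitute $m = \Omega(|A| k)$, so that $|A| k / m$ is bounded by a constant and $e^{-2 |A| k / m}$ is bounded below by some constant $c \in (0, 1)$. Then $(1 - e^{-2|A|k/m})^k \le (1 - c)^k$, and taking $k = \Theta(\log(1/\delta_{err}))$ with a sufficiently large hidden constant makes $(1-c)^k \le \delta_{err}$, which produces $\Pr[\wh{z} = z] \ge 1 - \delta_{err} \cdot \alpha$.

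The main obstacle I expect is the step treating the $k$ query-side events ``bit $h_i(y)$ equals $1$'' as independent, since the $h_i(y)$ can collide and the bits at these positions are mildly negatively correlated with the insertion bits. This is the standard Bloom-filter approximation used throughout the literature and explicitly invoked in Section~\ref{sec:pre_def_bf} of this paper, so I would state it as a working assumption consistent with Condition~\ref{con:utility_condition} rather than attempt a formal concentration argument. The remaining pieces, namely the inequality $(1-1/m)^{|A|k} \ge e^{-2|A|k/m}$ and the logarithmic tuning $k = \Theta(\log(1/\delta_{err}))$, are routine algebra.
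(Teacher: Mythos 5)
Your proposal is correct and follows essentially the same route as the paper's proof: decompose via the no-false-negative property so that only $\Pr[\wh{z}=1\mid z=0]\cdot\alpha$ matters, bound the probability that a bit is still zero by $(1-1/m)^{|A|k}\ge e^{-2|A|k/m}$, and apply the standard independence treatment of the $k$ probed bits to obtain $(1-e^{-2|A|k/m})^k$. Your explicit handling of the ``further'' claim and your acknowledgment of the independence approximation are slightly more careful than the paper's own write-up, but the argument is the same.
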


\begin{proof}
Recall that we have defined Bloom filter in Definition~\ref{def:bloom_filter}, it only has false positive error. Therefore, we only need to calculate the following
\begin{align*}
    \Pr[\wh{z} = 1 | z = 0]
\end{align*}

Recall that $A \subset [n]$ denotes the set of elements inserted into the Bloom filter. And $h_i : [n] \rightarrow [m]$ for each $i \in [k]$ denotes $k$ hash functions used in the Bloom filter. 

For a query $y \notin A$, we denotes event $E_1$ happens if the following happens:
\begin{align*}
    h_i[y] = 1, \forall i \in [k]
\end{align*}

Recall that we have defined Bloom filter in Definition~\ref{def:bloom_filter}, we have 
\begin{align}\label{eq:def_E_1}
    \Pr[\wh{z} = 1 | z = 0] = \Pr[E_1].
\end{align}

Now, we start calculating $\Pr[E_1]$.

Recall that we assume a hash function selects each array position with equal probability in the lemma statement. 

During one inserting operation, the probability of a certain bit is not set to $1$ is 
\begin{align*}
    (1 - \frac{1}{m})^k
\end{align*}

If we have inserted $|A|$ elements, the probability that a certain bit is still $0$ is
\begin{align*}
    (1 - \frac{1}{m})^{|A| k} = ( (1-\frac{1}{m})^{m} )^ {|A| k/m } \geq e^{-2 |A| k / m}
\end{align*}
where the last step follows from $(1-1/m)^m \geq e^{-2}$ for all $m \geq 2$.

Thus the probability that a certain bit is $1$ is
\begin{align*}
    1 - (1 - \frac{1}{m})^{ |A| k} \leq 1 - e^{-2 |A| k / m}.
\end{align*}

Combining the above fact, we have
\begin{align}\label{eq:pr_e}
    \Pr[E_1] = & ~ (1 - (1 - \frac{1}{m})^{|A|k})^k \notag \\
    \leq & ~ (1 - e^{-2 |A| k / m})^k.
\end{align}
where the first step follows from the definition of event $E_1$, the second step follows from $(1-1/m)^m \geq e^{-2}$ for all $m \geq 2$. 

Therefore, the accuracy of Bloom filter is
\begin{align*}
    \Pr[\wh{z} = z] 
    = & ~ 1 - \Pr[\wh{z} = 1 | z = 0] \Pr[z=0] \\
    = & ~ 1 - \Pr[E_1] \alpha \\
    \geq & ~ 1 - (1 - e^{-2 |A| k / m})^k \alpha.
\end{align*}
where the first step follows from Bloom filter only has false positive error, the second step follows from the definition of event $E_1$ and the definition of $\alpha$, the third step follows from Eq.~\eqref{eq:pr_e}. 

\end{proof}

\subsection{Accuracy (compare DPBloomFilter with Standard BloomFilter) for Query}\label{sec:acc_dpbloom_bloom}
We then assess the accuracy loss caused by the introduction of the random response technique by comparing the outputs of the DPBloomfilter with those of the standard Bloom filter.

\begin{lemma}[Accuracy (compare DPBloomFilter with Standard BloomFilter) for Query
]\label{lem:dpbloom_bloom_accuracy:formal}

If the following conditions hold
\begin{itemize}
    \item Let $\wh{z}$ be defined as Definition~\ref{def:three_z}.
    \item Let $\wt{z}$ be defined as Definition~\ref{def:three_z}.
    \item Let $\alpha: = \Pr[ z = 0 ] \in [0,1]$
    \item Let $t := \frac{ e^{\epsilon_0} }{ e^{\epsilon_0} + 1 }$. 
    \item Let $\delta_{\mathrm{err}}$ be defined as in Lemma~\ref{lem:bloom_true_accuracy:formal}. 
\end{itemize}

Then, we can show
\begin{align*}
\Pr[ \wt{z} = \wh{z}] \geq t \cdot (\alpha - \delta_{\mathrm{err}}).
\end{align*}

\end{lemma}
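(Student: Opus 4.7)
The plan is to decompose $\Pr[\wt{z}=\wh{z}]$ by conditioning on the Bloom filter's output $\wh{z}$. Writing
\begin{align*}
\Pr[\wt{z}=\wh{z}] = \Pr[\wt{z}=1 \mid \wh{z}=1]\,\Pr[\wh{z}=1] + \Pr[\wt{z}=0 \mid \wh{z}=0]\,\Pr[\wh{z}=0],
\end{align*}
reduces the problem to estimating two conditional probabilities together with $\wh{\alpha} := \Pr[\wh{z}=0]$. The two halves of the decomposition capture the two cases where the random response can corrupt the filter's opinion.

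For the two conditionals, I would reason bit by bit. If $\wh{z}=1$, then all $k$ bits $g[h_i(y)]$ are $1$, and for $\wt{z}=1$ to occur each of these $k$ bits must independently stay $1$ under the random response; by Definition~\ref{def:random_response} this happens with probability exactly $t^k$. If $\wh{z}=0$, then at least one bit $g[h_{i^{*}}(y)]$ equals $0$, so that bit remains $0$ under the random response with probability $t$, which alone suffices to force $\wt{z}=0$; therefore $\Pr[\wt{z}=0 \mid \wh{z}=0] \geq t$. The main subtlety here is the second direction: naively one might want an equality, but since the bits are only lower-bounded by a single ``witness bit'' being $0$, we content ourselves with the clean lower bound $t$. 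This is the one nontrivial step, and I expect it to be the main obstacle — everything else is bookkeeping.

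Next, I would lower-bound $\wh{\alpha}$ using the no-false-negative property of the standard Bloom filter together with Lemma~\ref{lem:bloom_true_accuracy:formal}. Since $\Pr[\wh{z}=0 \mid z=1]=0$, we have $\wh{\alpha} = \Pr[\wh{z}=0 \mid z=0]\,\alpha \geq (1-\delta_{\mathrm{err}})\,\alpha \geq \alpha - \delta_{\mathrm{err}}$, where the last step uses $\alpha \leq 1$. Combining the pieces,
\begin{align*}
\Pr[\wt{z}=\wh{z}] \;\geq\; t^k \,(1 - \wh{\alpha}) + t\,\wh{\alpha} \;\geq\; t\,\wh{\alpha} \;\geq\; t\,(\alpha - \delta_{\mathrm{err}}),
\end{align*}
where the second inequality drops the non-negative term $t^k(1-\wh{\alpha})$. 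This yields the claimed bound. No further machinery is required beyond the random response definition and the accuracy lemma for the standard Bloom filter already established.
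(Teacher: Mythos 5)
Your proposal is correct and follows essentially the same route as the paper: the same conditioning on $\wh{z}$, the exact value $t^k$ for the all-ones branch, the lower bound $t$ for the zero branch (the paper bounds the complementary event $\Pr[\wt{z}=1\mid\wh{z}=0]\leq 1-t$, which is the same witness-bit observation), and the same final step of dropping the $t^k$ term and bounding $\Pr[\wh{z}=0]$ via the false-positive rate. Your handling of $\wh{\alpha}$ as an inequality $\wh{\alpha}\geq(1-\delta_{\mathrm{err}})\alpha$ is in fact slightly more careful than the paper's asserted equality, but the argument is otherwise identical.
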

\begin{proof}
We denote the query as $q$. 

We define
\begin{align}\label{def:Q}
    Q := \{j \in [m] ~:~ h_i(q) = j,~ i \in [k]\}
\end{align}

We denote $Q[i]$ as the $i$-th element in $Q$. 

Using basic probability rules, we have
\begin{align*}
    & ~ \Pr[\wt{z} = \wh{z}] \\
    = & ~ \Pr[\wt{z} = 1 | \wh{z} = 1] \Pr[\wh{z} = 1] \\
    + & ~ \Pr[\wt{z} = 0 | \wh{z} = 0] \Pr[\wh{z} = 0].
\end{align*}

{\bf Step 1}. Calculate $\Pr[\wt{z} = 1 | \wh{z} = 1]$

We denote event $E_2$ happens as the following happens:
\begin{align*}
    \wt{g}[j] = g[j], \forall j \in Q.
\end{align*}

Recall that we have defined Bloom filter in Definition~\ref{def:bloom_filter}, we have 

\begin{align*}
    \Pr[\wt{z} = 1 | \wh{z} = 1] = \Pr[E_2].
\end{align*}

Now, we calculate the probability that $E_2$ happens. 
\begin{align*}
    \Pr [E_2] = & ~ \prod_{i = 1}^k \Pr [\wt{g}[Q[i]] = g[Q[i]]] \notag \\
    = & ~ (\frac{e^{\epsilon_0}}{e^{\epsilon_0} + 1})^k.
\end{align*}
where the first step follows from each entry of $g$ is independent, the second steps follows from the definition of $\wt{g}$. 

Therefore, we have
\begin{align}\label{eq:pr_wtz1_z1}
    \Pr[\wt{z} = 1 | \wh{z} = 1] 
    = & ~ (\frac{e^{\epsilon_0}}{e^{\epsilon_0} + 1})^k.
\end{align}

{\bf Step 2}. Calculate $\Pr[\wt{z} = 0 | \wh{z} = 0]$

Recall we have defined $Q \subset [m]$ in Eq.~\eqref{def:Q}. We further define
\begin{align*}
    Z := \{j \in Q ~:~ g[j] = 0\}.
\end{align*}

We denote $Z[i]$ as the $i$-th element in $Z$. 

We further define
\begin{align*}
    \ov{Q} := Q \backslash Z.
\end{align*}

By basic probability rules, we have
\begin{align*}
    \Pr[\wt{z} = 0 | \wh{z} = 0] = & ~ 1 - \Pr[\wt{z} = 1 | \wh{z} = 0].
\end{align*}

Now, let's calculate $\Pr[\wt{z} = 1 | \wh{z} = 0]$

$[\wt{z} = 1 | \wh{z} = 0]$ happens only if the following conditions hold:
\begin{enumerate}
    \item All elements in $Z$ flip from $0$ to $1$.
    \item All elements in $\ov{Q}$ remain $1$.
\end{enumerate}

Then, we have
\begin{align*}
    \Pr[\wt{z} = 1 | \wh{z} = 0]  = & ~ \prod_{i = 1}^{|Z|} \Pr [\wt{g}[Z[i]] = 1] \prod_{i = 1}^{|\ov{Q}|} \Pr [\wt{g}[\ov{Q}[i]] = 1] \notag \\
    = & ~ (\frac{1}{e^{\epsilon_0} + 1})^{|Z|} (\frac{e^{\epsilon_0}}{e^{\epsilon_0} + 1})^{|\ov{Q}|} \notag \\
    \leq & ~ (\frac{1}{e^{\epsilon_0} + 1})^{|Z|} \notag \\
    \leq & ~ \frac{1}{e^{\epsilon_0} + 1}.
\end{align*}
where the first step follows from the above analysis, the second step follows from the definition of $\wt{g}$, the third step follows from $|\ov{Q}| \geq 0$ and $\frac{e^{\epsilon_0}}{e^{\epsilon_0} + 1} < 1$, the fourth step follows from $|Z| \geq 1$ and $\frac{1}{e^{\epsilon_0} + 1} < 1$. 

Therefore, we have
\begin{align}\label{eq:pr_wtz0_z0}
    \Pr[\wt{z} = 0 | \wh{z} = 0] = & ~ 1 - \Pr[\wt{z} = 1 | \wh{z} = 0] \notag \\
    \geq & ~ 1 -  \frac{1}{e^{\epsilon_0} + 1} \notag \\
    = & ~ \frac{ e^{\epsilon_0} }{ e^{\epsilon_0} + 1 }.
\end{align}
Let $\hat \alpha := \Pr[ \wh{z} = 0 ]$, then we have $1- \wh{\alpha} = \Pr[ \wh{z} = 1 ]$. 
Let $\alpha := \Pr[ z = 0 ]$.
Note that $ \wh{\alpha} = \alpha (1 - \delta_{\mathrm{err}}) $.

Let $t := \frac{e^{\epsilon_0}}{e^{\epsilon_0} + 1}$. 

The final accuracy is 
\begin{align*}
& ~ \Pr[\wt{z} = 0 | \wh{z} = 0]  \cdot \Pr[ \wh{z} = 0 ] + \Pr[\wt{z} = 1 | \wh{z} = 1]  \cdot \Pr[ \wh{z} = 1 ] \\
= & ~ \Pr[\wt{z} = 0 | \wh{z} = 0]  \cdot \wh{\alpha} + \Pr[\wt{z} = 1 | \wh{z} = 1]  \cdot (1- \wh{\alpha}) \\
= & ~ \Pr[\wt{z} = 0 | \wh{z} = 0]  \cdot \alpha (1 - \delta_{err}) \\
+ & ~ \Pr[\wt{z} = 1 | \wh{z} = 1]  \cdot (1- \alpha + \alpha \cdot \delta_{err}) \\
\geq & ~ \frac{ e^{\epsilon_0} }{ e^{\epsilon_0} + 1 }  \cdot \alpha (1 - \delta_{err}) + (\frac{ e^{\epsilon_0} }{ e^{\epsilon_0} + 1 })^k  \cdot (1- \alpha + \alpha \cdot \delta_{err}) \notag \\ 
= & ~ t \cdot (\alpha - \alpha \cdot \delta_{err}) + t^k  \cdot (1- \alpha + \alpha \cdot \delta_{err}) \\
\geq & ~ t \cdot \alpha \cdot (1 - \delta_{err}).
\end{align*}

where the first step follows from the definition of $\wh{\alpha}$, the second step follows from $ \wh{\alpha} = \alpha (1 - \delta) $, the third step follows from  Eq.~\eqref{eq:pr_wtz1_z1} Eq.~\eqref{eq:pr_wtz0_z0}, the fourth step follows from basic algebra rules, the fifth step follows from $(1 - \alpha + \alpha \cdot \delta_{\mathrm{err}}) \geq 0$. 

Therefore, the final accuracy is $t \cdot (\alpha - \delta_{err})$. 
\end{proof}

\subsection{Accuracy (compare DPBloomfilter with true-answer) for Query}\label{sec:acc_dpbloom_true}
Now we can examine the utility guarantees of DPBloomfilter by calculating the error between the ground truth for query and the output of DPBloomfilter.

\begin{theorem}[Accuracy (compare DPBloomfilter with true-answer) for Query, formal version of Lemma~\ref{thm:dpbloom_true_accuracy:informal}]\label{thm:dpbloom_true_accuracy:formal}

If the following conditions hold
\begin{itemize}
    \item Let $\wh{z}$ be defined as Definition~\ref{def:three_z}.
    \item Let $z$ be defined as Definition~\ref{def:three_z}.
    \item Let $\alpha: = \Pr[ z = 0 ] \in [0,1]$
    \item Let $t := e^{\epsilon_0} / (e^{\epsilon_0} + 1)$. 
    \item Let $\delta_{\mathrm{err}}$ be defined as in Lemma~\ref{lem:bloom_true_accuracy:formal}. 
\end{itemize}

Then, we can show 
\begin{align*}
\Pr[ \wt{z} = z ] \geq \alpha (1-t-t^k) \delta_{\mathrm{err}} + \alpha t .
\end{align*}
\end{theorem}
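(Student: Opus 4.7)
My plan is to expand $\Pr[\wt{z}=z]$ by conditioning on the joint outcome $(z, \wh{z})$ and then combine the two prior utility results: Lemma~\ref{lem:bloom_true_accuracy:formal} controls the distribution of $\wh{z}$ relative to $z$, while the conditional random-response probabilities that appear inside the proof of Lemma~\ref{lem:dpbloom_bloom_accuracy:formal} control $\wt{z}$ relative to $\wh{z}$. Because the standard Bloom filter has no false negatives, only three cases can occur: $(z,\wh{z}) \in \{(1,1),(0,0),(0,1)\}$, the last being the false-positive branch.

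First, I would record the three conditional probabilities established in the random-response analysis of the preceding lemma: $\Pr[\wt{z}=1 \mid \wh{z}=1] = t^{k}$ (all $k$ queried bits must be preserved by the flipping step), $\Pr[\wt{z}=0 \mid \wh{z}=0] \geq t$ (it suffices that one originally-zero queried bit remain unflipped), and hence $\Pr[\wt{z}=0 \mid \wh{z}=1] = 1-t^{k}$. Next, I would set $q := \Pr[z=0, \wh{z}=1]$ and apply Lemma~\ref{lem:bloom_true_accuracy:formal} to conclude $q \leq \alpha\delta_{\mathrm{err}}$, together with $\Pr[z=1,\wh{z}=1]=1-\alpha$ and $\Pr[z=0,\wh{z}=0]=\alpha-q$.

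Summing the three contributions yields
\begin{align*}
\Pr[\wt{z}=z] \geq t^{k}(1-\alpha) + t(\alpha-q) + (1-t^{k})\, q = t^{k}(1-\alpha) + t\alpha + (1-t-t^{k})\, q.
\end{align*}
Substituting the bound on $q$ into the coefficient of $(1-t-t^{k})$ and then dropping the nonnegative term $t^{k}(1-\alpha)$ produces the claimed lower bound $\alpha(1-t-t^{k})\delta_{\mathrm{err}} + \alpha t$.

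The main subtlety is the direction of the inequality when substituting $q \leq \alpha\delta_{\mathrm{err}}$ inside a term whose sign depends on $1-t-t^{k}$: the substitution preserves a lower bound precisely in the regime $1-t-t^{k}\leq 0$, which is the operationally interesting regime where the privacy parameter $\epsilon_{0}$ is not vanishingly small (so $t$ is bounded away from $1/2$ and the false-positive branch actually \emph{helps} accuracy). Aside from this sign bookkeeping, the proof is a routine decomposition, and all of the genuine content is carried by the two earlier lemmas; I expect no further obstacle.
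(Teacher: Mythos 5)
Your decomposition is essentially the paper's proof: the paper likewise expands $\Pr[\wt{z}=z]$ over the joint outcomes of $(z,\wh{z})$, uses the absence of false negatives to kill the $(1,0)$ branch, plugs in $\Pr[\wt{z}=1\mid\wh{z}=1]=t^k$ and $\Pr[\wt{z}=0\mid\wh{z}=0]\ge t$ from Lemma~\ref{lem:dpbloom_bloom_accuracy:formal}, and then drops the nonnegative term $t^k(1-\alpha)$; your $q$ is exactly the paper's $\Pr[E_1]\cdot\alpha$. The one point of divergence is the final substitution. The paper writes $\alpha(1-t-t^k)\Pr[E_1]=\alpha(1-t-t^k)\delta_{\mathrm{err}}$ as an \emph{equality}, i.e.\ it implicitly takes $\delta_{\mathrm{err}}$ to be the false-positive probability itself, so no sign analysis is needed. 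You instead read Lemma~\ref{lem:bloom_true_accuracy:formal} as giving only $q\le\alpha\delta_{\mathrm{err}}$, and you correctly observe that the substitution then preserves the lower bound only when $1-t-t^k\le 0$. Be aware that since $t=e^{\epsilon_0}/(e^{\epsilon_0}+1)\ge 1/2$, the excluded regime $1-t-t^k>0$ is nonempty exactly when $k\ge 2$ and $\epsilon_0$ is small (e.g.\ $t=1/2$, $k=2$ gives $1-t-t^2=1/4>0$) --- the high-privacy regime --- so under your reading the theorem as stated is not fully proved there, and closing it would require either adopting the paper's convention $\delta_{\mathrm{err}}=\Pr[E_1]$ or retaining the discarded slack $t^k(1-\alpha)$ and the gap between $\Pr[\wt{z}=0\mid\wh{z}=0]$ and $t$. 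Aside from this bookkeeping issue, which the paper itself resolves only by its implicit definition of $\delta_{\mathrm{err}}$, your argument matches the paper's step for step.
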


\begin{proof}

We have
\begin{align*}
    & ~ \Pr[ \wt{z} = z ] \\
    = & ~ \Pr [\wt{z} = 0 | \wh{z} = 0] \Pr [\wh{z} = 0 | z = 0] \Pr[z=0] \\
    + & ~ \Pr [\wt{z} = 0 | \wh{z} = 1] \Pr [\wh{z} = 1 | z = 0] \Pr[z=0] \\
    + & ~ \Pr [\wt{z} = 1 | \wh{z} = 1] \Pr [\wh{z} = 1 | z = 1] \Pr[z=1] \\
    + & ~ \Pr [\wt{z} = 1 | \wh{z} = 0] \Pr [\wh{z} = 0 | z = 1] \Pr[z=1]\\
    \geq & ~ t \cdot (1 - \Pr[E_1]) \cdot \alpha + (1- t^k) \cdot \Pr[E_1]\cdot \alpha + t^k \cdot 1 \cdot (1-\alpha)\\
     = & ~ \alpha (1-t-t^k) \delta_{\mathrm{err}} + \alpha t + t^k(1-\alpha)\\
     \geq & ~ \alpha (1-t-t^k) \delta_{\mathrm{err}} + \alpha t.
\end{align*}
where the first step from basic probability rules, the secod step follows from Equation~\ref{eq:def_E_1}, Equation \ref{eq:pr_wtz0_z0} and definition of $\alpha$ and $t$, the third step follows from basic algebra,  the fourth step follows from the fact that $t,\alpha \in [0,1]$.

\end{proof}

To make it easier to understand, we also provide the utility analysis of the Bloom filter under the case of random guess.  

\begin{lemma}[Accuracy for Query under Random Guess]\label{lem:random_guess}
If the following conditions hold
\begin{itemize}
    \item Let $\wh{z}$ be defined as Definition~\ref{def:three_z}.
    \item $\epsilon_0 = 0$. Namely, each bit in the bit-array of the DP Bloom has $\frac{1}{2}$ probability to be set to $0$, and  $\frac{1}{2}$ probability to be set to $1$. 
\end{itemize}

Then, we can show 
\begin{align*}
    \Pr[\wt{z} = 0] = & ~ 1 - \frac{1}{2^k}, \notag \\
    \Pr[\wt{z} = 1] = & ~ \frac{1}{2^k}.
\end{align*}
\end{lemma}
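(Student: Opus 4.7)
The plan is to exploit the fact that setting $\epsilon_0 = 0$ makes the random response mechanism (Definition~\ref{def:random_response}) degenerate into a fair coin flip, decoupling $\wt{g}$ entirely from the underlying true Bloom filter state $g$. Specifically, for every $j \in [m]$, the probabilities become $e^{0}/(e^{0}+1) = 1/2$ and $1/(e^{0}+1) = 1/2$, so $\wt{g}[j]$ is uniform on $\{0,1\}$ regardless of $g[j]$. Moreover, since the random response is applied independently to each of the $m$ bits in the initialization loop of Algorithm~\ref{alg:init}, the entries of $\wt{g}$ are mutually independent and their joint distribution does not depend on the inserted set $A$ at all.

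Next, I would invoke the Query procedure: it outputs $\wt{z} = 1$ precisely when $\wt{g}[h_i(y)] = 1$ for all $i \in [k]$. Assuming the $k$ hash outputs $h_1(y), \ldots, h_k(y)$ are distinct, which is the standard modeling assumption used throughout the paper's utility analysis (see, e.g., Lemma~\ref{lem:bloom_true_accuracy:formal}), these are $k$ mutually independent Bernoulli$(1/2)$ events, so their intersection has probability $(1/2)^k = 1/2^k$. Complementarity immediately yields $\Pr[\wt{z} = 0] = 1 - 1/2^k$, giving both claimed identities.

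The only delicate point is the independence of the $k$ bit lookups: a hash collision $h_i(y) = h_{i'}(y)$ would mean the same bit is examined twice, so the event reduces to fewer than $k$ independent fair coins and the acceptance probability would strictly exceed $1/2^k$. The cleanest resolution is to note that in the operational regime $m \gg k$ used throughout the paper (consistent with the assumption $m = \Omega(|A|k)$ in Lemma~\ref{lem:bloom_true_accuracy:formal}), such single-query collisions are negligible, and the product decomposition into $k$ independent $1/2$ factors is exact in the model under which the rest of the utility analysis is carried out. With that modeling convention, the proof reduces to a one-line product computation.
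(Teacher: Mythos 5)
Your proof is correct and follows essentially the same route as the paper's: with $\epsilon_0 = 0$ each bit of $\wt{g}$ is an independent fair coin regardless of $g$, so the query accepts exactly when $k$ independent Bernoulli$(1/2)$ bits are all $1$, giving $1/2^k$, and the complement gives the other identity. The collision caveat you raise is a genuine subtlety that the paper's own proof silently skips over — under the paper's hashing model the exact acceptance probability is $\E[2^{-|Y|}]$, where $Y$ is the set of distinct hash positions of the query (Definition~\ref{def:Y}), which strictly exceeds $2^{-k}$ whenever collisions have positive probability, so the stated equality implicitly assumes the $k$ lookup positions are distinct, precisely the convention you make explicit.
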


\begin{proof}
    By the definition of Bloom filter~\ref{def:bloom_filter}, the answer $\wt{z} = 1$ requires $k$ corresponding positions in the bit-array of the query are all set to $1$. 

    Note that each bit has $\frac{1}{2}$ probability to be set to $1$. Therefore, we have
    \begin{align*}
        \Pr[\wt{z} = 1] = \frac{1}{2^k} .
    \end{align*}

    Then, we have $\Pr[\wt{z} = 0] = 1 - \Pr[\wt{z} = 1] = 1 - \frac{1}{2^k}.$
\end{proof}

\section{Running Time}\label{sec:appendix_running_time}
In this section, we provide the proof of running time for Algorithm~\ref{alg:init}. The running time for our algorithm consists of two parts: time for initialization in Section~\ref{sec:time_init} and time for query 
in Section~\ref{sec:time_query}. 
\subsection{Running time for initialization}\label{sec:time_init}
Now we calculate the time of initialization for our algorithm. 
\begin{lemma}[Running time for initialization]\label{lem:init_time}
Let $\mathcal{T}_h$ denote the time of evaluation of function $h$ at any point. 

It takes $O(|A| \cdot k \cdot \mathcal{T}_h + m)$ time to run the initialization function.
\end{lemma}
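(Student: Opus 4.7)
The plan is to decompose the \textsc{Init} procedure of Algorithm~\ref{alg:init} into its three sequential phases and bound each of them separately. Since the claimed bound is additive, a straightforward summation yields the result, and there is no subtle interaction between the phases to worry about.

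First, I would analyze the setup block: initializing every entry of $g$ to $0$ costs $O(m)$, while choosing the $k$ hash functions, computing $N = F_W^{-1}(1-\delta)$, and setting $\epsilon_0 \gets \epsilon / N$ are one-time operations independent of $|A|$ that can be absorbed into the lower-order terms (or taken as preprocessing). Second, for the insertion double-loop, each of the $|A|$ elements requires $k$ hash evaluations at cost $\mathcal{T}_h$ each, followed by an $O(1)$ assignment $g[j] \gets 1$, for a total contribution of $O(|A| \cdot k \cdot \mathcal{T}_h)$. Third, the random response loop iterates over all $m$ bits exactly once and draws a single Bernoulli sample per bit to produce $\wt g[j]$, contributing another $O(m)$.

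Summing these three contributions yields $O(m) + O(|A| \cdot k \cdot \mathcal{T}_h) + O(m) = O(|A| \cdot k \cdot \mathcal{T}_h + m)$, as claimed. The only point worth flagging — though it is not really an obstacle — is that the random response phase is invoked \emph{once}, after all insertions are complete, rather than being triggered per insertion; this is why the $m$ term enters additively rather than multiplicatively with $|A|$ or $k$. Beyond this observation, the argument is a direct line-by-line cost accounting, so I do not anticipate any technical difficulty in writing it out in full.
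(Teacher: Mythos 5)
Your proposal is correct and follows essentially the same two-part decomposition as the paper's proof: $O(|A|\cdot k\cdot \mathcal{T}_h)$ for the insertion loop plus $O(m)$ for the bit-flipping pass, summed additively. The extra remarks about the setup block and the fact that the flip loop runs once are fine but do not change the argument.
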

\begin{proof}
 
{\bf Step 1} Let's consider the initialization of the standard Bloom filter. 

A single element $x$ needs $O(k \cdot \mathcal{T}_h)$ time to compute over $k$ hash functions. 

There are $|A|$ elements which need to be inserted. 

Combining the above two facts, it needs $O(|A| \cdot k \cdot \mathcal{T}_h)$ time to initialise the standard Bloom filter. 

{\bf Step 2} Let's consider the ``Flip each bit" part. 

Since there are $m$ bits in the Bloom filter, it takes $O(m)$ time to flip each bit.

Therefore, the initialization function needs $O(|A| \cdot k \cdot \mathcal{T}_h + m)$ time to run. 

\end{proof}

\subsection{Running time for query}\label{sec:time_query}
Then, we proceed to calculate the query time for our algorithm.

\begin{lemma}[Running time for query]\label{lem:query_time}
Let $\mathcal{T}_h$ denote the time of evaluation of function $h$ at any point. 
It takes $O(k \cdot \mathcal{T}_h)$ time to run each query $y$ in the query function.
\end{lemma}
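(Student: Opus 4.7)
}
The plan is to read off the running time directly from the pseudocode of the \textsc{Query} procedure in Algorithm~\ref{alg:init} and account for the cost of each primitive operation. The \textsc{Query} routine performs a single \textbf{for} loop ranging over $i = 1, \ldots, k$, and inside each iteration it does exactly three things: (i) evaluate the hash function $h_i$ at the input $y$, which by assumption costs $\mathcal{T}_h$ time; (ii) look up the bit $\wt{g}[j]$ at the computed index $j$, which is an $O(1)$ array access; and (iii) compare this bit against $1$ and possibly return \textsf{false}, also an $O(1)$ operation. A final \textbf{return true} outside the loop adds only $O(1)$.

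Thus the per-iteration cost is $\mathcal{T}_h + O(1) = O(\mathcal{T}_h)$, and since the loop executes at most $k$ times, the total query time is
\begin{align*}
k \cdot O(\mathcal{T}_h) \;=\; O(k \cdot \mathcal{T}_h).
\end{align*}
Note that early termination (when some $\wt{g}[j] \neq 1$) can only decrease this cost, so $O(k \cdot \mathcal{T}_h)$ is an upper bound in every case.

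There is essentially no main obstacle here: the argument is a direct bookkeeping exercise, and the only subtlety worth mentioning is that we are using a standard RAM model in which both array indexing and bit comparisons take constant time, which is the same model implicit in the analysis of the standard Bloom filter. This justifies absorbing all non-hash work into the $O(1)$ term per iteration and yields the stated bound.
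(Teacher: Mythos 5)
Your proof is correct and takes essentially the same approach as the paper: both count the $k$ hash evaluations at cost $\mathcal{T}_h$ each, with all other per-iteration work absorbed into constants. Your version is slightly more explicit about the $O(1)$ array access and comparison and the early-termination case, but these are refinements of the same direct accounting argument.
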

\begin{proof}

For each query $y$, the algorithm needs $O(k \cdot \mathcal{T}_h)$ time to compute the hash values of $y$ over $k$ hash functions. 

Therefore, it takes $O(k\cdot \mathcal{T}_h)$ time to run the query function for each query. 
\end{proof}

By combing the result of Lemma~\ref{lem:init_time} and Lemma~\ref{lem:query_time}, we can obtain the running of our entire algorithm is $O(|A|\cdot k \cdot \mathcal{T}_h + m)$.

\end{document}